\newcommand\eat[1]{}
\tikzset{
  jumpdot/.style={mark=*,solid},
  excl/.append style={jumpdot,fill=white},
  incl/.append style={jumpdot,fill=black},
  rexcl/.append style={jumpdot,color=red,fill=white},
  rincl/.append style={jumpdot,fill=black,color=red},
}
	\newtheorem{remark}{Remark}
	\newcommand\blfootnote[1]{%
  \begingroup
  \renewcommand\thefootnote{}\footnote{#1}%
  \addtocounter{footnote}{-1}%
  \endgroup
}
\DeclarePairedDelimiter\ceil{\lceil}{\rceil}
\DeclarePairedDelimiter\floor{\lfloor}{\rfloor}
\newcommand{\instance}{$\langle \boldsymbol{x}, s, k\rangle$\xspace}
\newcommand{\stp}{DIC\xspace}
\begin{document}
\title[The Capacity Constrained Facility Location Problem]{The Capacity Constrained Facility Location problem\footnote{This paper was previously circulated under the title ``Mechanism Design without Money for Common Goods."} }

\author{Haris Aziz}
\affiliation{%
  \institution{UNSW Sydney and Data61 CSIRO}
  \streetaddress{Data61, CSIRO, and UNSW, Sydney, NSW 2052, Australia}
  \city{Sydney}
  \state{NSW}
  \postcode{NSW 2052}
  \country{Australia}
  }
\author{Hau Chan}
\affiliation{%
  \institution{University of Nebraska-Lincoln}
  \department{Department of Computer Science and Engineering}
  \city{Lincoln}
  \state{NE}
  \postcode{68588}
  \country{USA}
}
\author{Barton E. Lee}
\affiliation{%
  \institution{UNSW Sydney and Data61 CSIRO}
  \streetaddress{Data61, CSIRO, and UNSW, Sydney, NSW 2052, Australia}
  \city{Sydney}
  \state{NSW}
  \postcode{NSW 2052}
  \country{Australia}
  }
  \author{David C. Parkes}
  \affiliation{%
    \institution{Harvard University}
    \department{John A. Paulson School of Engineering and Applied Sciences, }
    \city{Cambridge}
    \state{MA}
    \postcode{02138}
    \country{USA}
  }

  \blfootnote{*This paper was previously circulated under the title ``Mechanism Design without Money for Common Goods."}
\blfootnote{Authors' email addresses: Haris Aziz: {\color{blue}\url{haris.aziz@unsw.edu.au}};  Hau Chan: {\color{blue}\url{hchan3@unl.edu}}, Barton E. Lee: {\color{blue}\url{barton.e.lee@gmail.com}}, David C. Parkes: {\color{blue}\url{parkes@eecs.harvard.edu}}.}




\begin{abstract}

We initiate the study of the capacity constrained facility location problem from a mechanism design perspective. The capacity constrained setting leads to a new strategic environment where a facility serves a subset of the population, which is endogenously determined by the ex-post Nash equilibrium of an induced subgame and is not directly controlled by the mechanism designer. Our focus is on mechanisms that are ex-post dominant-strategy incentive compatible (DIC) at the reporting stage. We provide a complete characterization of DIC mechanisms via the family of Generalized Median Mechanisms (GMMs). In general, the social welfare optimal mechanism is not DIC. Adopting the worst-case approximation measure, we attain tight lower bounds on the approximation ratio of any DIC mechanism. The well-known median mechanism is shown to be optimal among the family of DIC mechanisms for certain capacity ranges. Surprisingly, the framework we introduce provides a new characterization for the family of GMMs, and is responsive to gaps in the current social choice literature highlighted by Border and Jordan (1983) and Barbar{\`a}, Mass{\'o} and Serizawa (1998).
\end{abstract}


\maketitle

\section{Introduction}


A common economic problem is deciding where a public facility should be located to service a population of agents with heterogenous preferences. For example, a government needs to decide the location of a public hospital, or library. More abstractly, the `location'  may represent a type or quality of a service.  For example, a government may have a fixed hospital location but must decide on the type of service the hospital will specialize, and, in particular, whether the service will be targeted to those suffering from acute, moderate, or mild severity of a certain illness. In such problems, participants may benefit by misreporting their preferences, and this can be  problematic for a decision maker trying to find a socially optimal solution. This leads to the mechanism design problem of providing optimal, or approximately optimal, solutions while also being \emph{strategyproof}, i.e., no agent can profit from misreporting their preferences regardless of what others report.\footnote{We focus on the `mechanism design without money' problem where the use of money is assumed to not be permitted. This is a natural assumption for environments where the use of money is considered  unlawful (e.g., organ donations) or  unethical (e.g., political decision making, or locating a public good).} We call this the \emph{facility location problem}.

A large literature has studied the facility location problem under the assumption that the facility does not face capacity constraints. When the facility is not capacity constrained, all agents can benefit from the facility and hence it is modeled as a \emph{public good}.\footnote{A public good is non-rivalrous and non-excludable.}  Under this assumption, the mechanism design problem is explored in several classic papers~\cite{Blac48,Gibb73,Gibb77a,Satt75,Moul80,BoJo83}, and more recently in  algorithmic mechanism design~\cite{PrTe13,NiRo01,FFG16}.

To the best of our knowledge, an unexplored setting for the mechanism design problem is where the public facility is capacity constrained.\footnote{There is a distinct setting sometimes referred to as the `constrained facility location' problems~\cite{SuBo15} where the feasible locations for the facility are constrained. The algorithmic problem, of locating multiple capacity constrained facilities when agents are not strategic, has also been studied~\cite{BrCh89,KPTW01,Vygen05:Approximation}.} Capacity constraints limit the number of agents who can benefit from the facility's services. Such constraints are ubiquitous in practice: a hospital is capacity constrained by the number of beds and doctors, and a library may have limited seating.  When present, capacity constraints introduce a particular form of \emph{rivalry} to the facility, since once the facility reaches its capacity limit additional agents are prevented from using, and hence benefiting, from the facility.

A number of new strategic challenges arise for the mechanism designer when the public facility is capacity constrained but is still non-excludable. For example, when the mechanism designer chooses a location for the facility, we cannot stipulate which agents  will be served, instead these decisions are made by
participants, through strategic interactions once the facility has been located. That is, the ex-post Nash equilibrium of a subgame induced by the facility location determines the agents who ultimately benefit from the facility and those who do not. This introduces a technical challenge, because it leads  agents to have interdependent utilities, since the utility for a particular location depends on who else will use the location (and in turn on their preferences). Furthermore, the reporting game is made in anticipation of the extensive-form game and ex-post Nash equilibrium, and  the designer must consider mechanisms that are strategyproof in this broader game-theoretic context.

In this paper, we initiate the study of the capacity constrained facility location problem from the viewpoint of mechanism design. In our model, $n$ agents are located in the $[0,1]$ interval, and there is a single facility to be located, this facility is able to service at most $k$ agents, where $k$ is some positive integer. When $k\ge n$ the capacity constraint is of no effect, and the capacity constrained facility location problem is equivalent to the classic problem. Agent locations are privately known, and, given a facility location, the ex-post Nash equilibrium of an induced subgame determines which agents are served.  The mechanism designer's problem is to design mechanisms that are strategyproof and maximize social welfare. In our model, we take strategyproof to  mean ex-post dominant-strategy incentive compatible (\stp) at the reporting stage. That is, conditional on the ex-post Nash equilibrium being attained in the induced subgame, an agent never benefits ex-post from misreporting their location to the mechanism regardless of what other agents report, and regardless of other agents' true locations. For ease of exposition, a mechanism that is \stp at the reporting stage will simply be said to be \stp. Unlike the classic facility location problem where the facility is not capacity constrained, the social welfare optimal mechanism is not \stp  except when the capacity constraint is trivial, i.e., $k=1$ or $n$. As a result, we follow the approach of Procaccia and Tennenholtz~\cite{PrTe13} and consider the approximate mechanism design problem. We adopt the worst-case approximation measure for social welfare, and ask what is the best approximation achievable  with a \stp mechanisms and how does this vary as a function of the capacity constraint?

The literature studying the facility location problem without capacity constraints, or simply $k=n$, provides a number of important results. Gibbard-Satterthwaite~\cite{Gibb73,Satt75} showed a powerful impossibility result: when agents can have unrestricted preferences there need not exist any strategyproof mechanism. As a result, more recent works typically restrict agent preferences' over the location of the facility to be single-peaked and sometimes in addition symmetric.\footnote{A single-peaked preference is symmetric if equidistant locations on either side of the ideal, or `peak', location are always equally preferred.} We focus on the case where, conditional on the agent being served, the agent has preferences that are both single-peaked and symmetric. When the objective of the mechanism designer is to maximize social welfare, i.e., utilitarian welfare, the standard median mechanism is both strategyproof and social welfare optimal~\cite{Blac48}. More generally, a goal of the social choice literature has been to characterize the complete family of strategyproof mechanisms. Closest to our setting, Border and Jordan~\cite{BoJo83} provide a partial characterization of strategyproof mechanisms  via the family of Generalized Median Mechanisms (GMMs). Border and Jordan show that a mechanism is strategyproof and unanimity respecting\footnote{Unanimity respecting simply means that if there is a unanimously most preferred facility location then the mechanism must locate the facility at this location.} if and only if it is a GMM, and that the family of GMMs is strictly smaller than the complete family of strategyproof mechanisms.\footnote{We note that in a slightly different setting, where the single-peaked preferences are possibly asymmetric, GMMs provide a complete characterization of strategyproof and `peak only' mechanisms (Proposition 3 of Moulin~\cite{Moul80}). Example~\ref{Example: DIC hard} in the present paper provides an example of a mechanism that is strategyproof in the Border and Jordan~\cite{BoJo83} setting but not the Moulin~\cite{Moul80} setting.} This has left a gap in the literature to characterize the complete family of strategyproof mechanisms and understand the difference in strategyproof mechanisms that are GMMs and those that are not. Figure~\ref{Figure: Summary of border jordan 1983} schematically illustrates this gap.\\

\textbf{Our Contributions:} We introduce a new mechanism problem, the capacity constrained facility location problem. This problem is a natural variant of the classic facility problem where the facility is assumed to face capacity constraints. A conceptual contribution  is to formalize the effect of capacity constraints when the facility is non-excludable but cannot service all agents. We do this by defining an extensive-form game involving the mechanism designer and agents. First, agents report their preferences to the designer, and then the facility is located by the mechanism. Once the facility is located a subgame is induced where agents strategically choose whether or not to attempt to be served by the facility. The ex-post Nash equilibrium determines which  agents are served by the facility and which are not.
We seek mechanisms that are strategyproof in this broader game-theoretic context, i.e.,  ex-post dominant-strategy incentive compatible; that is, conditional on the ex-post Nash equilibrium being achieved in the subgame, no agent can benefit from misreporting their location regardless of what other agents report and regardless of other agents' true locations.

%

 Our main theoretical contribution is a complete characterization of \stp mechanisms for the capacity constrained facility location problem. We show that a mechanism is \stp if and only if it belongs to the established family of mechanisms called the Generalized Median Mechanisms (GMMs), which appear in Moulin~\cite{Moul80} and Border and Jordan~\cite{BoJo83}. Thus, the framework we introduce  surprisingly provides a new characterization of GMMs. This result contributes to a novel perspective to a ``major open question" (Barbar{\`a}, Mass{\'o} and Serizawa~\citep{BMS98}) posed in Border and Jordan~\cite{BoJo83} (further discussion is provided in Section~\ref{Section: related lit}).

 We also provide algorithmic results and study the performance of \stp mechanisms in optimizing social welfare. We adopt the  worst-case approximation measure, and provide a lower bound on the approximation ratio of any \stp mechanism. We show that at best the approximation ratio of a \stp mechanism is $2\frac{k}{k+1}$ when $k\le \ceil{(n-1)/2}$, and $\max\{\frac{n-1}{k+1}, 1\}$ otherwise. Interestingly, this lower bound is achieved by the standard median mechanism (which is also \stp) when $k\le  \ceil{(n-1)/2}$ or $k=n$, and hence the median mechanism is optimal among all \stp mechanisms in those ranges. Figure~\ref{figure: illustration DIC lower boundxxy} illustrates these approximation results.

 Finally, we consider an extension of our framework where the
 mechanism designer can also restrict access to the facility, and
 hence dictate which agents are served. This extension is relevant to
 settings where the designer can issue permits, and prevent certain
 agents from accessing the facility. Under an anonymity assumption, we show that no mechanism that both locates the facility
 and stipulates which agents can be served is \stp.

\begin{figure}[H]
\centering 
\begin{tikzpicture}[scale=1.05,
  declare function={
    func(\x)= (\x <= 50) * (2*(\x)/(\x+1))   +
                   (\x>50) * (99/(\x+1)) 
   ;
    funcy(\x)= (\x <= 50) * (2*(\x)/(\x+1))   +
                          (\x>50) *(2*(\x)/(\x+1)) 
   ;
       funcz(\x)= (\x <= 81) * (2*(\x)/(\x+1))   +
                       (\x>81) *( 1+ 2*(100-\x+1)/(3*\x-2*(100)-2))
                          ;
  }
]
\begin{axis}[
  axis x line=middle, axis y line=middle,
  ymin=0, ymax=3, ytick={0,1,2,3}, ylabel={$\alpha$-approximation},
  xmin=0, xmax=100, xlabel=$k$,
  xtick={0,1,25,50,75,100},
xticklabels={$0$, 1,$n/4$,$n/2$, $3n/4$,$n$},
  domain=0:100,samples=201, 
]


\addplot [blue,ultra thick, domain=1:100] {func(x)};

\addplot [red,ultra thick,dotted, domain=1:100] {funcz(x)};

\legend{\stp lower bound, median mech. upper bound}
\end{axis}
\end{tikzpicture} 
\captionsetup{justification=centering,margin=2cm}
\caption{Worst-case approximation ratio as a function of the capacity constraint, $k$.}
\label{figure: illustration DIC lower boundxxy}
\end{figure}
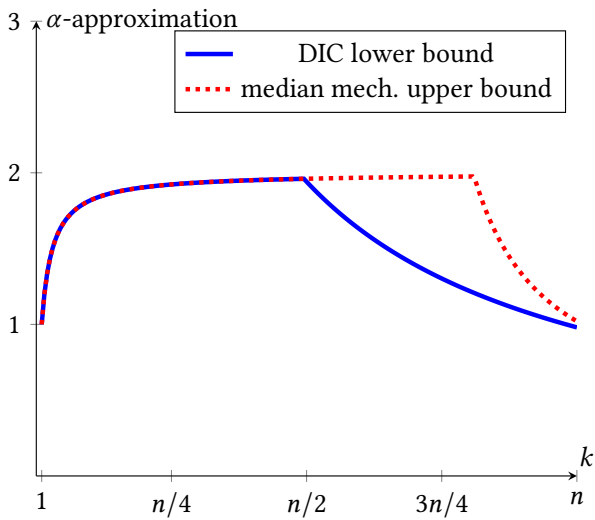

\textbf{Outline:} Section~\ref{Section: related lit} provides a brief literature review. Section~\ref{sec:model} presents our model and formalizes the objective of the mechanism designer, Section~\ref{Section: characterization} then presents our key characterization result of \stp mechanisms. Section~\ref{Section: approxim of dic} explores the performance, i.e., approximation results, of \stp mechanisms. Section~\ref{Section: Excludable} considers an extension of our framework where the mechanism designer is able to dictate which agents are served by the facility. Lastly, we conclude with a discussion in Section~\ref{section: discussion and conlc}.

\section{Related literature}\label{Section: related lit}

A number of papers have considered related mechanism problems where the use of money is not permitted~\citep{AsRo11,AbSo03,PrTe13,Moul80,BoJo83,Gibb73,Satt75,SuBo15,MLY+16}. Most closely related to our paper is~\cite{PrTe13}, where agents with single-peaked preferences are located along the real line and the problem of locating a (non-capacity constrained) public facility is studied with the goal of minimizing two distinct objective functions; the total social cost and the maximum social cost. This problem is often referred to as a single facility location problem, or single facility location game.\footnote{We do not review a large segment of computer science and operations research literature on facility location problems that assumes complete information and hence does not require a mechanism design approach to overcome strategic tensions (for a survey see~\cite{BrCh89}). Furthermore, this literature, when incorporating capacity constraints, typically focuses on the problem of locating multiple capacity constrained facilities that have sufficient capacity to service all agents~\cite{Cygan12:LP,Charikar02:Constant,KPTW01,Vygen05:Approximation}. Instead we review the subset of literature  that assumes strategic agents and takes a mechanism design approach.}  In this paper, we focus on minimizing the first objective function in the new environment where the facility is capacity constrained. In contrast to the setting studied by~\cite{PrTe13}, agents have interdependent utilities, in our model, due to the capacity constraints of the facility and the induced subgame. Accordingly, the mechanism design problem requires consideration of a broader game-theoretic environment where agents face an extensive-form game when reporting preferences.

%

 
 Another large body of literature has been concerned with characterizing \stp mechanisms for the unconstrained facility location problem. The key pioneering works in this area are by Moulin~\cite{Moul80}, and Border and Jordan~\cite{BoJo83}. In one-dimensional space and for symmetric and single-peaked preferences, Border and Jordan~\cite{BoJo83} characterize a general class of \stp mechanisms which have become to be known as \emph{generalized median mechanisms} (GMM), and in addition show that when the property of unanimity is enforced every \stp mechanism is a GMM.\footnote{Border and Jordan~\cite{BoJo83} also consider the problem in higher dimensions.} These results differ slightly from the characterization results of Moulin~\cite{Moul80} since the setting studied in~\cite{Moul80} does not restrict the single-peaked preferences to be symmetric.  Characterizing \stp but non-unanimity respecting mechanisms was posed as an open problem; as stated by Border and Jordan in~\cite{BoJo83} ``\emph{[the characterization] leaves several open problems. The most obvious question is: what happens if the unanimity assumption is dropped?}" Characterizations however, have remained elusive and it has become known as a ``\emph{major open question}"~\citep{BMS98} with only partial progress towards a resolution~\citep{Chin97,BMS98,PPS+97,Weym11}. In this paper we focus on the one-dimensional case where open questions still remain; in particular, the results of~\cite{BoJo83} in one-dimensional space leaves two gaps: 
 \begin{enumerate}
 \item there exist non-unanimity respecting \stp mechanisms that are not GMM, and 
 \item there exist \stp mechanisms that are GMMs but do not respect unanimity. 
 \end{enumerate} 
Our characterization of \stp mechanisms via the family of GMM, although considered in a different setting where the facility is capacity constrained, applies more generally to mechanisms that are not unanimity respecting. Hence, we contribute a novel perspective to these gaps in characterization, showing that a mechanism is \stp for all possible capacity constraint $k\le n$ if and only if it is a GMM. This means that any mechanism in gap (1) is not \stp when the facility is capacity constrained with $k<n$. Furthermore, the unanimity property is sufficient to ensure that a mechanism that is \stp in the non-capacity constrained setting remains \stp when capacity constraints are present.


\section{Model, Basic Properties, and Definitions} \label{sec:model}

\textbf{Model:} Let $N=\{1,\ldots, n\}$ be a finite set of $n$ agents and let $X=[0,1]$ be the domain of agent locations. Each agent $i\in N$ has a location $x_i\in X$, which is privately known,  the profile of agent locations is denoted by $\boldsymbol{x}=(x_1,  \ x_2, \  \ldots \ , \ x_n)$. The profile of all agent except some agent $i\in N$ is denoted by $\boldsymbol{x}_{-i}=(x_1, \ x_2,\  \ldots \ , x_{i-1}, \ x_{i+1}, \ \ldots \ , \ x_n)$. There is a single facility to be located in $X$ by some mechanism. A \emph{mechanism} is a function $M:  \prod_{i\in N} X\rightarrow X$ mapping a profile of locations to a single location.\footnote{We restrict our attention to deterministic mechanisms.} We denote the mechanism's output, or facility location, by $s\in X$.

The facility faces a capacity constraint $k \ : \ k \le n$, which provides a limit on the number of agents that can be served.  A \emph{served} agent attains utility $u_i=1-d(s, x_i)\ge 0$, where $d(\cdot, \ \cdot)$ denotes the Euclidean metric;  an \emph{unserved} agent attains zero utility, $u_i=0$.\footnote{Our characterization results (Section~\ref{Section: characterization}) do not rely on this specific utility function -- we only require that agents weakly prefer to be served than not, and conditional on being served the agent's utility is symmetric and (strictly) single-peaked. However, our approximation results do rely on the choice of utility function.} The set of agents served by the facility's limited capacity is not directly controlled by the mechanism, since the facility is assumed to be non-excludable.\footnote{In Section~\ref{Section: Excludable} we weaken this assumption and consider the problem when the facility can be made excludable.} Instead, this is determined by the equilibrium outcome of a subgame induced by the mechanism's choice of facility location. 


Given an instance $\langle \boldsymbol{x}, s, k\rangle$, we assume that the set of agents served by the facility is determined via the ex-post Nash equilibrium\footnote{That is, no agent has an incentive to unilaterally deviate, whatever the preferences of each agent.} of a subgame, $\Gamma_{\boldsymbol{x}}(s,k)$. The subgame $\Gamma_{\boldsymbol{x}}(s,k)$ is as follows. Each agent $i\in N$ chooses an action $a_i\in A= \{\emptyset, s\}$ of whether, or not, to travel from their location $x_i$ to the facility location $s$. Action $a_i=s$ denotes agent $i$'s choice to travel to the facility, and action $a_i=\emptyset$ denotes the agent's choice to not travel to the facility. We denote the profile of agent actions by $\boldsymbol{a}=(a_1, \ a_2, \ \ldots \ , a_n)$. An agent $i$ is \emph{served} by the facility if they travel to the facility, $a_i=s$, and strictly less than $k$ other agents travel to the facility, i.e., $|N(\boldsymbol{a},s)|\le k$ where $N(\boldsymbol{a},s):=\{i\in N\  : \ a_i=s\}$. If instead they travel to the facility and at least $k$ other agents also travel to the facility, i.e., $|N(\boldsymbol{a},s)|>k$, then a tie-breaking rule is used to determine which subset of $k$ agents in $N(\boldsymbol{a},s)$ are served. We assume a distance-based tie-breaking rule ($\triangleright$) whereby agent $i$ has higher priority than agent $j$, denoted $i\triangleright j$, if agent $i$ is closer to the facility than agent $j$, i.e., $d(s, x_i)<d(s, x_j)$; if agent $i$ and $j$ are equidistant, i.e., $d(s, x_i)=d(s, x_j)$, then we apply some deterministic tie-breaking rule.\footnote{This ensure that the binary relation $\triangleright$ is complete.} This distance-based tie-breaking rule can be motivated by a `first-come-first-serve' protocol when the location $s$ is geographical and agents physically travel to the facility to be served. If the location $s$ corresponds to a type, or quality, of service the `first-come-first-serve' protocol is analogous to a `best-fit' tie-breaking protocol that prioritizes agents according to how close the type of service being offered is to their true needs, i.e., $d(s,x_i)$. An agent $i$ with location $x_i$ attains utility $1-d(s, x_i)$ if $a_i=s$ and they are served, if $a_i=s$ and they are not served they attain utility $-d(s,x_i)$, and otherwise $a_i=\emptyset$ and agent $i$ attains zero utility.

Abusing terminology slightly, given a profile of locations $\boldsymbol{x}$ and facility location $s$, we will refer to $k$ highest priority agents with respect to the distance-based tie-breaking rule ($\triangleright$) as the \emph{$k$-closest} agents. We denote this set of agents by $N_k^*(\boldsymbol{x}, s)$. \\

\textbf{Basic properties of the model:} For any instance $\langle \boldsymbol{x}, s, k\rangle$, the subgame $\Gamma_{\boldsymbol{x}}(s, k)$ has an (essentially) unique equilibrium. There always exists an equilibrium where the $k$-closest agents, $N_k^*(\boldsymbol{x}, s)$, choose to travel to the facility and are served by the facility. In instances where one or more of the $k$-closest agents are indifferent between being served and not traveling to the facility, i.e., whenever $d(s, x_i)=1$ for some $i\in N_k^*(\boldsymbol{x}, s)$, multiple equilibria arise. For the purposes of this paper these equilibria are all `equivalent' since every agent attains the same utility in each of the equilibria. Proposition~\ref{Proposition: Basic prop 1} states this basic property. The proof is straightforward and left to the appendix for the interested reader. 

\begin{proposition}\label{Proposition: Basic prop 1}
For any instance $\langle \boldsymbol{x}, s, k\rangle$, there exists an equilibrium of the subgame $\Gamma_{\boldsymbol{x}}(s, k)$ and, furthermore, in every equilibrium agent $i\in N$ attains utility $1-d(s, x_i)$ if $i\in N_k^*(\boldsymbol{x}, s)$, and otherwise, attains zero utility. 
\end{proposition}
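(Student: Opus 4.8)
The plan is to prove the two assertions separately: existence of an equilibrium, and the claim that \emph{every} equilibrium induces the stated utility profile (``essential uniqueness''). Throughout I will exploit that the distance-based priority $\triangleright$ is a fixed complete linear order on $N$ (distances, broken by a deterministic rule), so $N_k^*(\boldsymbol{x},s)$ is simply its top-$k$ block and $|N_k^*(\boldsymbol{x},s)|=k$ since $k\le n$; moreover $j\triangleright i$ implies both $d(s,x_j)\le d(s,x_i)$ and $\{m\in N: m\triangleright j\}\subseteq\{m\in N: m\triangleright i\}$. I will also repeatedly use that $d(s,x_i)\le 1$ for every $i$ (as $s,x_i\in[0,1]$), so a served agent never gets negative utility and the only indifference that can occur at being served is when $d(s,x_i)=1$.

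For existence, consider the profile $\boldsymbol{a}^{*}$ in which $a_i^{*}=s$ exactly when $i\in N_k^*(\boldsymbol{x},s)$. Under $\boldsymbol{a}^{*}$ precisely $k$ agents travel, so every member of $N_k^*(\boldsymbol{x},s)$ is served and gets $1-d(s,x_i)\ge 0$; switching to $a_i=\emptyset$ yields $0$, not a strict gain. For $i\notin N_k^*(\boldsymbol{x},s)$, switching to $a_i=s$ makes the traveler set $N_k^*(\boldsymbol{x},s)\cup\{i\}$; since $i$ ranks below every member of $N_k^*(\boldsymbol{x},s)$ in $\triangleright$, the $k$ highest-priority travelers are still exactly $N_k^*(\boldsymbol{x},s)$, so $i$ stays unserved and gets $-d(s,x_i)\le 0$, again no gain. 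Hence $\boldsymbol{a}^{*}$ is an equilibrium, and its induced utilities are the ones claimed.

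For essential uniqueness, fix an arbitrary equilibrium $\boldsymbol{a}$, let $T=N(\boldsymbol{a},s)$, and recall the served set is the top $\min(|T|,k)$ agents of $T$ under $\triangleright$. \emph{(i) Each $i\in N_k^*(\boldsymbol{x},s)$ gets $1-d(s,x_i)$.} If $i$ is served this is immediate. An unserved traveler $i$ would need at least $k$ travelers of strictly higher priority, hence $\ge k$ agents of $N$ outranking $i$, contradicting $i\in N_k^*(\boldsymbol{x},s)$; so that case is vacuous. If $i\notin T$, then deviating to $a_i=s$ gives travelers $T\cup\{i\}$ in which at most $k-1$ agents outrank $i$, so $i$ would be among the top $k$ and hence served, getting $1-d(s,x_i)$; equilibrium forces $1-d(s,x_i)\le 0$, i.e.\ $d(s,x_i)=1$, so $i$'s actual payoff $0$ equals $1-d(s,x_i)$. \emph{(ii) Each $i\notin N_k^*(\boldsymbol{x},s)$ gets $0$.} If $i\notin T$ this is immediate. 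If $i$ travels and is unserved, not wishing to deviate to $\emptyset$ requires $-d(s,x_i)\ge 0$, i.e.\ $d(s,x_i)=0$, so the payoff is $0$. If $i$ travels and is served, then at most $k-1$ travelers outrank $i$ while $\ge k$ agents of $N$ do, so some $j\triangleright i$ is not traveling; $j$ deviating to $s$ lands among the top $k$ of $T\cup\{j\}$ (monotonicity of $\triangleright$), making $j$ served with payoff $1-d(s,x_j)$, so equilibrium forces $d(s,x_j)=1$, whence $d(s,x_i)\ge d(s,x_j)=1$ and $i$'s payoff $1-d(s,x_i)=0$.

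The main obstacle is the last case of (ii): an agent outside the top-$k$ block can in fact be ``served'' in some equilibrium — e.g.\ when many agents sit at distance exactly $1$ from $s$ — so the statement is genuinely about payoffs, not about who physically obtains service. Handling it requires pinpointing a displaced higher-priority non-traveler and combining the distance-$1$ knife-edge with the monotonicity of $\triangleright$; everything else is routine bookkeeping with the linear order and the bound $d(\cdot,\cdot)\le 1$.
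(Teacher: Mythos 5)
Your proof is correct. The existence half is the same canonical construction as the paper's (the $k$-closest agents travel, nobody else does, and no unilateral deviation strictly gains). For the second half you take a genuinely different route in structure: the paper proves a utility-invariance claim — it supposes two equilibria give some agent different utilities and derives a contradiction — and then combines that invariance with the constructed equilibrium, whereas you characterize payoffs directly in an arbitrary equilibrium by exhausting the cases of each agent's status (in or out of $N_k^*(\boldsymbol{x},s)$; traveling or not; served or not). Both arguments ultimately rest on the same two deviation facts: an agent of $N_k^*(\boldsymbol{x},s)$ who travels is always served (at most $k-1$ agents in all of $N$ outrank her), and if an agent outside $N_k^*(\boldsymbol{x},s)$ is served then some higher-priority non-traveler $j$ could deviate and be served, so equilibrium forces $d(s,x_j)=1$. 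The paper's invariance framing lets it dispose of the knife-edge equilibria implicitly (when $d(s,x_i)=1$ the two candidate utilities coincide, so no contradiction is needed), while your direct analysis makes those equilibria explicit — an agent in $N_k^*(\boldsymbol{x},s)$ can stay home only if $d(s,x_i)=1$, an agent outside it can be served only if $d(s,x_i)=1$, and an unserved traveler must have $d(s,x_i)=0$ — which is a bit more informative about the structure of the equilibrium set at the cost of more bookkeeping. Either way the payoff conclusion is identical, and your handling of the $\min(|T|,k)$ serving rule and of the monotonicity of $\triangleright$ under adding a traveler is accurate.
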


Given Proposition~\ref{Proposition: Basic prop 1}, we can denote agent $i$'s ex-post equilibrium utility from the facility location $s$ by simply $u_i^*(s, \boldsymbol{x}, k)$. A useful observation is that the agent's ex-post utilities are (weakly) single-peaked, this result is stated in Proposition~\ref{Proposition: Basic prop 2}. The proof is straightforward and left to the appendix for the interested reader. Intuitively, the result holds because under the distance-based priority ($\triangleright$) an agent's priority only (weakly) improves when the facility moves from a location $s<x_i$ to a new location $s' \ : \ s<s'\le x_i$ (similarly for $s>x_i$).

\begin{proposition}\label{Proposition: Basic prop 2}
For any agent $i\in N$ and any pair of instances \instance and $\langle \boldsymbol{x}, s', k\rangle$, if $s<s'\le x_i$  or $x_i\le s'<s$ then $u_i^*(s, \boldsymbol{x}, k)\le u_i^*(s', \boldsymbol{x}, k)$.
\end{proposition}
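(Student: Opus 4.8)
By Proposition~\ref{Proposition: Basic prop 1}, $u_i^*(s,\boldsymbol{x},k)$ equals $1-d(s,x_i)$ if $i\in N_k^*(\boldsymbol{x},s)$ and $0$ otherwise; in particular $u_i^*(\cdot,\boldsymbol{x},k)\ge 0$ always. By the mirror-image symmetry of the distances and the priority order about $x_i$, it suffices to treat the case $s<s'\le x_i$. If $i\notin N_k^*(\boldsymbol{x},s)$ then $u_i^*(s,\boldsymbol{x},k)=0\le u_i^*(s',\boldsymbol{x},k)$ and we are done, so assume $i\in N_k^*(\boldsymbol{x},s)$. The plan is to establish (i) $i\in N_k^*(\boldsymbol{x},s')$ as well, and (ii) $1-d(s',x_i)\ge 1-d(s,x_i)$; together these give the claim. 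Part (ii) is immediate since $0\le x_i-s'\le x_i-s$. So the content is (i).

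For (i), I would compare the priority order $\triangleright$ at the facility locations $s$ and $s'$. The key lemma is that, for every agent $j\neq i$, the map $t\mapsto d(t,x_j)-d(t,x_i)$ is non-decreasing on $(-\infty,x_i]$. This follows from the piecewise-linear form of $|\cdot|$: for $t\le x_i$ one has $d(t,x_i)=x_i-t$ (slope $-1$), while $d(t,x_j)=|t-x_j|$ has slope $-1$ for $t<x_j$ and $+1$ for $t>x_j$, so the (continuous) difference has slope $0$ on $(-\infty,x_j)$ and slope $2$ on $(x_j,x_i]$. Evaluating at $t=s$ and $t=s'$ with $s<s'\le x_i$ yields $d(s,x_j)-d(s,x_i)\le d(s',x_j)-d(s',x_i)$. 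Since agents equidistant from the facility are ordered by a fixed deterministic tie-break that does not depend on the facility location, this inequality implies that $j\triangleright i$ with respect to $s'$ entails $j\triangleright i$ with respect to $s$: if $j$ is strictly closer to $s'$ than $i$, then it is strictly closer to $s$; and if $j$ is equidistant from $s'$ as $i$ and wins the tie-break, then at $s$ it is either strictly closer, or equidistant and wins the same fixed tie-break. Hence $\{j\in N\setminus\{i\}: j\triangleright i\text{ w.r.t.\ }s'\}\subseteq\{j\in N\setminus\{i\}: j\triangleright i\text{ w.r.t.\ }s\}$. Since $\triangleright$ is a complete order, $i\in N_k^*(\boldsymbol{x},t)$ holds exactly when fewer than $k$ agents $j$ satisfy $j\triangleright i$ at $t$; as this count does not increase when passing from $s$ to $s'$, the assumption $i\in N_k^*(\boldsymbol{x},s)$ forces $i\in N_k^*(\boldsymbol{x},s')$, proving (i).

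I expect the main obstacle to be the bookkeeping around ties: one must verify that an agent tying with $i$ at the closer facility $s'$ cannot become strictly farther than $i$ at $s$, which is exactly what the monotonicity lemma together with fixity of the deterministic tie-break delivers. Everything else is routine. The remaining case $x_i\le s'<s$ follows from the identical argument applied to the configuration reflected about $x_i$ (all distances, and hence all priorities, being reflection-invariant), which reduces it to the case already handled.
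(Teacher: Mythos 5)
Your proof is correct and takes essentially the same route as the paper's: invoke Proposition~\ref{Proposition: Basic prop 1}, observe that moving the facility from $s$ to $s'$ (toward $x_i$, on the same side) can only weakly improve $i$'s distance-based priority, so membership in $N_k^*(\boldsymbol{x},\cdot)$ is preserved and $1-d(\cdot,x_i)$ weakly increases. The only difference is that you make explicit, via the monotonicity of $t\mapsto d(t,x_j)-d(t,x_i)$ and the fixed deterministic tie-break, the priority-improvement step that the paper merely asserts.
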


In this paper we are interested in `strategyproof' mechanisms where agents do not have an incentive to misreport their location.  In particular, we use the ex-post \emph{Dominant-strategy Incentive Compatible} (DIC) concept of strategyproofness. That is, a mechanism $M$ is  \stp if for every agent $i\in N$ 
$$u_i^*\Big(M(x_i, \hat{\boldsymbol{x}}_{-i}), \boldsymbol{x}, k \Big)\ge u_i^*\Big(M(x_i', \hat{\boldsymbol{x}}_{-i}), \boldsymbol{x}, k \Big)$$
for every $x_i'$, for every $\hat{\boldsymbol{x}}_{-i}$, and for every $\boldsymbol{x}_{-i}$. Note that \stp implies that, conditional on the ex-post Nash equilibrium being achieved in the subgame $\Gamma_{\boldsymbol{x}}(s, k)$, the mechanism is dominant-strategy incentive compatible at the reporting stage. Formally speaking, the \stp definition depends on the capacity constraint $k$ however, abusing notation slightly, we omit the $k$ dependence as this will be clear from the context.\\

\textbf{Objective of the mechanism designer:} In this paper we are interested in DIC mechanisms that perform well with respect to \emph{social welfare}, i.e., the sum of agents' equilibrium utilities. As is now standard in the algorithmic mechanism design literature we measure the performance of a DIC mechanism by the worst-case approximation ratio. 

Given an instance \instance, denote the optimal social welfare by $\Pi^*(\boldsymbol{x}, k):=\max_{s\in X} \sum_{i=1}^n u_i^*(s, \boldsymbol{x}, k),$ and given a mechanism $M$ let $\Pi_M(\boldsymbol{x}, k)$ denote the social welfare attained by the mechanism, i.e., 
\begin{align*}
\Pi_M(\boldsymbol{x}, k)&:=\sum_{i=1}^n u_i^*(s, \boldsymbol{x}, k) &&\text{where $s=M(\boldsymbol{x})$.}
\end{align*}
The mechanism $M$ is an $\alpha$-approximation if
\begin{align}\label{Equation: approx 1}
\max_{\boldsymbol{x}\in \prod_{i=1}^n X}\Bigg\{\frac{\Pi^*(\boldsymbol{x},k)}{\Pi_M(\boldsymbol{x}, k)}\Bigg\}\le \alpha,
\end{align}
the LHS of (\ref{Equation: approx 1}) is referred to as the \emph{approximation ratio}. A mechanism (or family of mechanisms) is said to have a \emph{lower bound}, $\bar{\alpha}$, on the approximation ratio if 
\begin{align}\label{Equation: approx 2}
\bar{\alpha}\le \max_{\boldsymbol{x}\in \prod_{i=1}^n X}\Bigg\{\frac{\Pi^*(\boldsymbol{x},k)}{\Pi_M(\boldsymbol{x}, k)}\Bigg\}.\end{align}
We refer to a mechanism $M$ that attains the optimal social welfare for all instances \instance, and hence is an $\alpha=1$-approximation, as an \emph{optimal mechanism}.  Again, the optimal mechanism definition depends on the capacity constraint $k$ however, abusing notation, we will omit the $k$ dependence as this will be clear from the context. Note that the optimal mechanism need not, and in general will not, be \stp for a given $k$.\\

\begin{remark}\label{Remark: special case}
When $k=n$ our model reduces to the well-known facility location problem studied in~\cite{PrTe13,Moul80,Blac48}. Accordingly, this case ($k=n$) is fully resolved: the `median' mechanism which always locates the facility at the median reported location is both optimal and \stp.  However, the case for $k<n$ has not been studied before -- this is the focus of the present paper.
\end{remark}

To illustrate how the case where $k<n$ differs from the standard $k=n$ setting we provide an example. The example considers a mechanism that is \stp when $k=n$ but for any capacity constraint $k<n$ is not \stp.

\begin{example}\label{Example: DIC hard}
Let $M$ be the mechanism such that $M(\boldsymbol{x})=\text{arg}\min_{s\in \{ 1/4, \ 3/4\}} d(s, x_i)$ for some $i\in N$, tie-breaking in favor of $s=1/4$ if necessary. That is, the mechanism locates the facility at either location $1/4$ or $3/4$ depending on which is closest to agent $i$'s report.

First, notice that the mechanism $M$ is \stp when $k=n$. If $k=n$ then  every agent $i$ is always served by the facility and hence attains utility $1-d(s, x_i)$ for any facility location $s$. It is immediate that agent $i$ can never strictly benefit from misreporting their location. 

However, when $k<n$ the mechanism is not \stp. To see this, consider an instance where agent $i$ is located at $3/8$ and all other agents are located at $1/4$. When agent $i$ truthfully reports, the facility is located at $1/4$ and is not served -- leading to zero utility. On the other hand, misreporting to $x_i'\in (1/2,1]$ leads to the facility location $3/4$ and agent $i$ is the closest agent to the facility. In this case agent $i$ attains strictly higher utility equal to $1-d(3/4, 3/8)>0$. Thus, the mechanism is not \stp for any $k<n$.\hfill $\diamond$
\end{example}


\subsection{A complete characterization of \stp mechanisms}\label{Section: characterization}


We begin by defining a family of mechanisms called \emph{Generalized Median Mechanisms (GMM)}. This family was introduced by Border and Jordan~\cite{BoJo83} for the $k=n$ setting, and provides a partial characterization of \stp mechanisms. The main result of the present paper shows that GMMs provide a complete characterization of mechanisms that are (1) \stp for all $k\le n$, and (2) \stp for some $k<n$.

\begin{definition}{[Generalized Median Mechanism (GMM)]}
A mechanism $M$ is said to be a \emph{Generalized Median Mechanism} (GMM) if for each $S\subseteq N$ there are constants $a_S$, such that for all location profiles $\boldsymbol{x}$
\begin{align}\label{equation: GMM}
M(\boldsymbol{x})=\min_{S\subseteq N} \max\Big\{\max_{i\in S}\{x_i\}, a_S\Big\}.
\end{align}
\end{definition}

To build some intuition, we highlight some well-known GMM mechanisms:
\begin{enumerate}
\item The \emph{median mechanism}\footnote{A mechanism that always outputs the median of the reported location profile i.e., the $\floor{(n+1)/2}$-th smallest report.} is attained from (\ref{equation: GMM}) by setting $a_S=1$ for all subsets $S\subseteq N$ with $|S|<\floor{(n+1)/2}$ and $a_S=0$ otherwise. 
\item The \emph{$s$-constant mechanism}\footnote{A mechanism that always outputs the location $s$.} for some location $s\in X$, i.e., the mechanism that always outputs location $s$, is attained from (\ref{equation: GMM}) by setting $a_\emptyset=s$ and $a_S=1$ for all other (non-empty) subsets $S\subseteq N$. 
\item The \emph{agent $i$ dictatorship mechanism}\footnote{A mechanism that always outputs the location of agent $i$'s report.} is attained from (\ref{equation: GMM}) by setting $a_S=0$ for $S=\{i\}$ and $a_S=1$ for other subsets $S\subseteq N$.
\end{enumerate}
An example of a mechanism that is not a GMM is the dictatorial-style mechanism considered in Example~\ref{Example: DIC hard}.

The main result of the present paper is the following characterization: A mechanism $M$ is \stp for some $k< n$ if and only if $M$ is \stp for every $k\le n$ if and only if $M$ is a GMM. This result is stated in Theorem~\ref{Corollary: equivalence results}.

\begin{theorem}\label{Corollary: equivalence results}
Let $M$ be a mechanism. The following are equivalent: 
\begin{enumerate}
\item $M$ is a GMM,
\item $M$ is \stp for some $k<n$,
\item $M$ is \stp  for every $k\le n$.
\end{enumerate}
\end{theorem}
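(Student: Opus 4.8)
The plan is to prove the chain of implications $(1)\Rightarrow(3)\Rightarrow(2)\Rightarrow(1)$. The implication $(3)\Rightarrow(2)$ is trivial (take any $k<n$). The substance lies in $(1)\Rightarrow(3)$ and $(2)\Rightarrow(1)$. For $(1)\Rightarrow(3)$, I would take an arbitrary GMM $M$ with constants $\{a_S\}_{S\subseteq N}$ and fix a capacity $k\le n$. The key technical device is a monotonicity property of the GMM formula: moving agent $i$'s report towards $s=M(\boldsymbol{x}_{-i},x_i)$ does not move the outcome away on the far side, and more precisely $M$ is nondecreasing in $x_i$ and the outcome stays between the ``old'' and ``new'' values as $x_i$ varies (a standard fact about GMMs, since each $\max\{\max_{j\in S}x_j,a_S\}$ is nondecreasing in $x_i$ and the outer $\min$ preserves this). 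Combining this with Proposition~\ref{Proposition: Basic prop 2} — that ex-post utility $u_i^*(\cdot,\boldsymbol{x},k)$ is single-peaked at $x_i$ — one argues: if agent $i$ reports truthfully the outcome is some $s^\star$; if $i$ deviates to $x_i'$ the outcome moves monotonically to some $s'$, and because the outcome moves monotonically with the report, $s'$ cannot be strictly ``further past'' $x_i$ in a direction that the truthful $s^\star$ had not already reached on that side — hence by single-peakedness $u_i^*(s',\boldsymbol{x},k)\le u_i^*(s^\star,\boldsymbol{x},k)$. Care is needed: one must show that any $s'$ achievable by a deviation satisfies either $s'=s^\star$, or $s'$ lies on the same side of $x_i$ as $s^\star$ and is weakly closer, or $s'$ is on the opposite side but then the truthful report could not have produced something strictly better — this is where a clean ``if $s^\star\le x_i$ then every reachable $s'\le$ the value obtained by reporting $x_i'=1$, and that value is already $\le s^\star$ or handled symmetrically'' argument is carried out using the explicit min-max form.

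For $(2)\Rightarrow(1)$ — the harder direction — I would assume $M$ is DIC for some fixed $k<n$ and show $M$ must be a GMM. The natural route is to first establish that $M$ restricted to the ``$k=n$-like'' behavior is well-behaved, then invoke the Moulin/Border--Jordan characterization machinery. Concretely: (a) First show $M$ is \emph{monotone} in the Moulin sense (moving a report towards the current outcome doesn't move the outcome away), by a contrapositive argument — a violation of monotonicity, combined with the single-peakedness and the explicit structure of $u_i^*$ (which, crucially, depends on whether agent $i$ is among the $k$-closest), would give a profitable misreport. The $k<n$ hypothesis is exactly what makes misreports about ``being served'' possible, ruling out pathological mechanisms like Example~\ref{Example: DIC hard}. (b) Then show $M$ is ``peak-only'' / uncompromising in the appropriate sense — that $M(\boldsymbol{x})$ depends on $\boldsymbol{x}$ only through the reports in a way consistent with a GMM — and in particular that it is \emph{tops-only}, using that an agent's DIC constraint across all $\hat{\boldsymbol{x}}_{-i}$ forces the outcome's dependence on $x_i$ to be of the threshold form $\max\{x_i,\text{const}\}$ once the others are fixed. (c) Assemble these into the min-max representation by the standard argument: define $a_S$ as the outcome when agents in $S$ report $1$ and agents in $N\setminus S$ report $0$ (or the analogous extreme profile), and verify (\ref{equation: GMM}) holds by an induction on $|N|$ or a direct double-inequality using monotonicity in each coordinate.

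I expect the main obstacle to be step (b)/(a) of the $(2)\Rightarrow(1)$ direction: carefully translating the DIC condition — which is about $u_i^*$, a utility that is only \emph{weakly} single-peaked and that has a discontinuity at the $k$-closest boundary — into the clean monotonicity/tops-only properties that the GMM representation theorem needs. The subtlety is that an agent who is \emph{not} among the $k$-closest gets zero utility regardless of distance, so naive single-peakedness arguments break; one must exploit that an agent can always misreport to a location that \emph{guarantees} being among the $k$-closest (e.g. reporting near the induced outcome), so that the ``served'' branch of $u_i^*$ becomes the binding one, and this is precisely where $k<n$ is used. Getting the quantifier order right in the DIC definition ($\forall x_i',\forall \hat{\boldsymbol{x}}_{-i},\forall \boldsymbol{x}_{-i}$) — in particular that the true $\boldsymbol{x}_{-i}$ is quantified separately from the reported $\hat{\boldsymbol{x}}_{-i}$ — is what lets one choose adversarial true profiles that pin down the outcome's functional form, and I would be careful to use that freedom explicitly rather than implicitly.
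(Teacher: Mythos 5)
Your overall architecture is the paper's: the bridge in both directions is the uncompromising property, with $(1)\Rightarrow(3)$ following from the structural ``clamp'' behaviour of a GMM in each coordinate together with the weak single-peakedness of ex-post utilities (Proposition~\ref{Proposition: Basic prop 2}), and $(2)\Rightarrow(1)$ obtained by exploiting the separate quantification over true and reported profiles when $k<n$. Two remarks on the easy direction. ``$M$ is nondecreasing in $x_i$'' is not the property you need (monotone but compromising rules are not strategyproof even for $k=n$); what is needed is exactly that, fixing $\hat{\boldsymbol{x}}_{-i}$, the map $x_i\mapsto M(x_i,\hat{\boldsymbol{x}}_{-i})$ clamps $x_i$ to an interval, i.e.\ the uncompromising property of Lemma~\ref{theorem: generalized median}, which the paper simply cites from Border and Jordan. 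Your later phrasing (every reachable outcome is at most the value obtained by reporting $1$, which already equals the truthful outcome $s^\star$ when $s^\star<x_i$) shows you intend this, so it is a matter of invoking the right lemma rather than a wrong idea; similarly your step (c) re-derives the min--max representation from uncompromisingness, which the paper avoids by citing the same non-strategic lemma.

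The genuine gap is in $(2)\Rightarrow(1)$. The engine you name --- ``an agent can always misreport to a location that guarantees being among the $k$-closest (e.g.\ reporting near the induced outcome)'' --- is false in this model: service is determined by the \emph{true} locations and the facility location, never by the report itself, so a misreport buys service only indirectly, by moving the facility, and whether it does so depends on where the other agents truly are. What actually does the work (and is the entire content of Proposition~\ref{theorem: capacitated SP equivalence}) is the device you mention only in passing at the end: fix the reports $\hat{\boldsymbol{x}}_{-i}$, let $s=M(x_i,\hat{\boldsymbol{x}}_{-i})<x_i$, and choose the others' \emph{true} locations adversarially. With all others truly at $s$ and $k<n$, the truthful agent is unserved (utility $0$), so DIC forces every deviation $x_i'\ge s$ to yield a facility location no larger than $(s+x_i)/2$; with all others truly at $1$, the truthful agent is served with utility $1-(x_i-s)$, which pushes those outcomes down into $[0,s]$; and a final instance in which agent $i$'s own true location is taken to be the deviating report $x_i''$ (others still truly at $1$) rules out outcomes strictly below $s$, giving uncompromisingness and hence, via Lemma~\ref{theorem: generalized median}, the GMM form. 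Your steps (a)--(b) assert the conclusion of these constructions (monotonicity/tops-only) without supplying them, and the one concrete mechanism you do propose would fail; with the three explicit instances inserted, your plan coincides with the paper's proof.
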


We present the proof via a series of propositions, and utilize a characterization of Border and Jordan~\citet{BoJo83}. Before presenting these propositions we illustrate the contribution of Theorem~\ref{Corollary: equivalence results}, benchmarked against the results of~\citet{BoJo83}: where GMM are shown to be a strict subset of \stp mechanisms when $k=n$. Below, in Figure~\ref{Figure: Summary of border jordan 1983}, we present the result of~\citet{BoJo83}. Figure~\ref{Figure: illustration of our results} illustrates our contribution. When considering the capacity constrained problem, with $k<n$, the family of \stp mechanisms coincides precisely with the GMM family.

\begin{figure}[H]
\centering 
\begin{minipage}{.5\textwidth}
\centering
   \begin{tikzpicture}[scale=0.8,
dot/.style = {circle, inner sep=0pt, minimum size=1mm, fill,
              node contents={}}
                        ]
\def\firstcircle{(0,0) coordinate (a) circle (2.75cm)}
\def\secondcircle{(0,-0.24) coordinate (b)  circle (2.3cm)}
\def\thirdcircle{(0,-0.9) coordinate (b)  circle (1.5cm)}
    \begin{scope}
\fill[gray] \secondcircle;

    \end{scope}
\draw \firstcircle; 
\draw \secondcircle;
 \node[circle] at (0, 2.35)  {DIC};
   \node[circle] at (0, 0.8)   {GMM};
    \end{tikzpicture}
    \caption{Setting where $k=n$~\cite{BoJo83}.}
    \label{Figure: Summary of border jordan 1983}
    \end{minipage}%
\begin{minipage}{.5\textwidth}
\centering 
   \begin{tikzpicture}[scale=0.8,
dot/.style = {circle, inner sep=0pt, minimum size=1mm, fill,
              node contents={}}
                        ]
\def\firstcircle{(0,0) coordinate (a) circle (2.75cm)}
\def\secondcircle{(0,-0.24) coordinate (b)  circle (2.3cm)}
\def\thirdcircle{(0,-0.9) coordinate (b)  circle (1.5cm)}
    \begin{scope}
\fill[white] \firstcircle;
\fill[gray] \secondcircle;

    \end{scope}
\draw \secondcircle;
   \node[circle] at (0, 0.8)   {DIC $\equiv$ GMM};
    \end{tikzpicture}
     \caption{Setting where $k<n$.}
    \label{Figure: illustration of our results}
    \end{minipage}
    \end{figure}

First, we present a result of~\citet{BoJo83} characterizing the family of GMMs via a property of the mechanism that they call `uncompromising'. Informally speaking, an uncompromising mechanisms means that an agent cannot influence the mechanism output in their favor by reporting extreme locations. The most obvious mechanism satisfying this property is the median mechanism. Formally, a mechanism $M$ is said to be \emph{uncompromising} if for every profile of locations $\boldsymbol{x}$, and each agent $i\in N$,  if $M(\boldsymbol{x})=s$ then
\begin{align}\label{eqaution: uncompromising 1}
x_i>s &\implies M(x_i',\boldsymbol{x}_{-i})=s\qquad \text{ for all } x_i'\ge s &&\text{ and,}\\
x_i<s &\implies M(x_i' , \boldsymbol{x}_{-i})=s \qquad \text{ for all } x_i'\le s.\label{eqaution: uncompromising 2}
\end{align}

\begin{lemma}[Border and Jordan~\cite{BoJo83}]\label{theorem: generalized median}
A mechanism $M$ is uncompromising if and only if it is a GMM. 
\end{lemma}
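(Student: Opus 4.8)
The plan is to prove the two implications separately. The direction ``every GMM is uncompromising'' is a direct manipulation of the $\min$--$\max$ formula (\ref{equation: GMM}); the converse, ``every uncompromising mechanism is a GMM'', is the substantive part and follows the strategy of Border and Jordan~\cite{BoJo83}.

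For the first direction, write $M(\boldsymbol{x})=\min_{S\subseteq N} g_S(\boldsymbol{x})$ with $g_S(\boldsymbol{x}):=\max\{\max_{i\in S}x_i,\, a_S\}$ (convention $\max_{i\in\emptyset}x_i=0$). Fix a profile $\boldsymbol{x}$, put $s=M(\boldsymbol{x})$, and choose $S^\star$ attaining the minimum, $g_{S^\star}(\boldsymbol{x})=s$. If $x_i>s$ then $i\notin S^\star$ (otherwise $g_{S^\star}(\boldsymbol{x})\ge x_i>s$), so for any $x_i'\ge s$ the term $g_{S^\star}$ is unchanged, giving $M(x_i',\boldsymbol{x}_{-i})\le s$, while every $g_S$ is either unchanged (if $i\notin S$, hence still $\ge s$) or satisfies $g_S(x_i',\boldsymbol{x}_{-i})\ge x_i'\ge s$ (if $i\in S$), giving $M(x_i',\boldsymbol{x}_{-i})\ge s$; hence $M(x_i',\boldsymbol{x}_{-i})=s$, which is (\ref{eqaution: uncompromising 1}). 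The case $x_i<s$, $x_i'\le s$ is symmetric: each $g_S$ stays $\ge s$, and $g_{S^\star}(x_i',\boldsymbol{x}_{-i})\le s$ too---immediately if $i\notin S^\star$, and because $x_i<s$ together with $g_{S^\star}(\boldsymbol{x})=s$ forces $\max\{\max_{j\in S^\star\setminus\{i\}}x_j,\,a_{S^\star}\}\le s$ if $i\in S^\star$---so $M(x_i',\boldsymbol{x}_{-i})=s$, which is (\ref{eqaution: uncompromising 2}). No monotonicity of the constants $a_S$ is needed here.

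For the converse, I would first prove a monotonicity lemma: an uncompromising $M$ is non-decreasing in each coordinate. If raising coordinate $i$ from $x_i$ to $x_i'$ dropped the output from $s$ to some $s'<s$, then at the new profile $x_i'$ lies strictly above the output $s'$, so uncompromisingness pins the output at $s'$ for every value of coordinate $i$ that is at least $s'$; a two-case split on whether $x_i\ge s'$ then contradicts $M(\boldsymbol{x})=s$ (using uncompromisingness at $\boldsymbol{x}$ in the remaining case), and chaining coordinates yields full monotonicity. Next, for each $S\subseteq N$ define $a_S:=M(\boldsymbol{\chi}^S)$, where $\boldsymbol{\chi}^S$ puts every agent of $S$ at $0$ and every other agent at $1$. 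It then suffices to show $M(\boldsymbol{x})=\min_S g_S(\boldsymbol{x})$ for an arbitrary $\boldsymbol{x}$ with $s:=M(\boldsymbol{x})$. For ``$\le$'', take $S^\star=\{i:x_i\le s\}$ and morph $\boldsymbol{x}$ into $\boldsymbol{\chi}^{S^\star}$ by first sliding every agent with $x_i>s$ up to $1$ (each single move keeps the output at $s$ by (\ref{eqaution: uncompromising 1}), applied in order so the current output is always $s$) and then sliding every agent of $S^\star$ down to $0$ (by (\ref{eqaution: uncompromising 2}) for those strictly below $s$, and by monotonicity for those exactly at $s$); this shows $a_{S^\star}=M(\boldsymbol{\chi}^{S^\star})\le s$, hence $g_{S^\star}(\boldsymbol{x})\le s$. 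For ``$\ge$'', fix any $S$; if $\max_{i\in S}x_i\ge s$ the bound is trivial, and otherwise morph $\boldsymbol{x}$ into $\boldsymbol{\chi}^S$ by sliding $\{i\notin S:x_i>s\}$ up to $1$, then $S$ down to $0$ (both keep the output at $s$), then the remaining agents up to $1$ (the output can only rise, by monotonicity), showing $a_S=M(\boldsymbol{\chi}^S)\ge s$, hence $g_S(\boldsymbol{x})\ge s$. Combining the two bounds gives $M(\boldsymbol{x})=\min_S g_S(\boldsymbol{x})$, i.e., $M$ is a GMM.

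I expect the main obstacle to be the bookkeeping in the converse direction around agents located exactly at the output $s$: the uncompromising property (\ref{eqaution: uncompromising 1})--(\ref{eqaution: uncompromising 2}) says nothing about such agents, which is precisely why the monotonicity lemma is needed to push the morphing arguments through, and why the single-agent moves must be ordered (extreme-side agents first) so that each move is applied at a profile whose output is still $s$. The boundary coalitions $S=\emptyset$ and $S=N$, together with the convention for $\max_{i\in\emptyset}x_i$, also need a quick separate check.
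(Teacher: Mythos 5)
Your proof is correct, but note that the paper itself does not prove this lemma at all: it is imported verbatim from Border and Jordan~\cite{BoJo83} (the paper only remarks that the statement involves no strategic notions and hence transfers to the capacity-constrained setting). So what you have done is supply a self-contained reconstruction of the cited characterization, and it is sound. The easy direction is exactly the right manipulation of the $\min$--$\max$ formula (\ref{equation: GMM}); the substantive direction correctly isolates the two ingredients one needs on top of (\ref{eqaution: uncompromising 1})--(\ref{eqaution: uncompromising 2}): coordinatewise monotonicity (to handle agents sitting exactly at the output, about which uncompromisingness is silent), and the definition $a_S=M(\boldsymbol{\chi}^S)$ at the extreme profiles, with the two morphing arguments giving $a_{S^\star}\le s$ and $a_S\ge s$ and hence $M(\boldsymbol{x})=\min_S\max\{\max_{i\in S}x_i,a_S\}$. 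Two places are compressed but repairable exactly as you hint: in the monotonicity lemma, the assertion that $x_i'$ lies strictly above the new output $s'$ is not automatic --- the case $x_i'\le s'$ must be killed first by applying (\ref{eqaution: uncompromising 2}) at the original profile (since then $x_i<x_i'\le s'<s$), after which your two-case split on $x_i\ge s'$ finishes the argument; and in the ``$\le$'' morph, once an agent located exactly at $s$ has been slid down the output may drop below $s$, so later appeals to (\ref{eqaution: uncompromising 2}) are not literally available at output $s$ --- but this is harmless because monotonicity alone makes every downward move weakly decrease the output, which is all that the claimed bound $a_{S^\star}\le s$ requires. With those two clarifications the argument is complete and matches, in spirit, the original Border--Jordan proof that the paper chose to cite rather than reproduce.
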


Note that Lemma~\ref{theorem: generalized median}, although proved in the setting where $k=n$,  does not rely on any strategic properties of the mechanism and so applies more generally to our setting of interest where $k\le n$.

We now prove our first proposition towards the characterization result. Proposition~\ref{proposition: uncompromise connections} says that, every GMM is \stp for any $k\le n$. 

\begin{proposition}\label{proposition: uncompromise connections}
Every GMM is \stp for any $k\le n$.
\end{proposition}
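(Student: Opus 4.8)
The plan is to fix an arbitrary GMM $M$ and an arbitrary capacity constraint $k\le n$, and to show that no agent can strictly benefit from misreporting. The key leverage is Lemma~\ref{theorem: generalized median}: a GMM is exactly an uncompromising mechanism. So it suffices to show that every uncompromising mechanism is \stp for every $k\le n$. Fix agent $i$, their true location $x_i$, a report profile $\hat{\boldsymbol{x}}_{-i}$ of the others, and true locations $\boldsymbol{x}_{-i}$ of the others. Write $s = M(x_i,\hat{\boldsymbol{x}}_{-i})$ for the outcome under truthful reporting and $s' = M(x_i',\hat{\boldsymbol{x}}_{-i})$ for the outcome under a deviation to $x_i'$. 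We must show $u_i^*(s,\boldsymbol{x},k)\ge u_i^*(s',\boldsymbol{x},k)$.

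The first step is a case split on where the true location $x_i$ sits relative to $s$. If $x_i = s$ then uncompromising (applied in either direction, or trivially) forces $M(x_i',\hat{\boldsymbol{x}}_{-i}) = s$ for all $x_i'$ on the appropriate side; more carefully, one argues that $s$ is the report-$i$-independent value the mechanism would pick, so the deviation outcome $s'$ must satisfy $|s' - x_i|\ge 0 = |s-x_i|$ — but I should be careful here, since uncompromising only constrains reports $x_i'\ge s$ (resp.\ $\le s$). The cleaner route: suppose $x_i > s$. By \eqref{eqaution: uncompromising 1}, any deviation $x_i'\ge s$ yields $M(x_i',\hat{\boldsymbol{x}}_{-i}) = s$, so no such deviation changes the outcome. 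If instead $x_i' < s$, then I claim the outcome $s' := M(x_i',\hat{\boldsymbol{x}}_{-i})$ satisfies $s' \le s$: this follows because applying uncompromising at the profile $(x_i',\hat{\boldsymbol{x}}_{-i})$ with the reported value $x_i' < s'$ would force the outcome to stay at $s'$ for all reports up to $s'$; if $s' > s$ we would get a contradiction with $M(x_i,\hat{\boldsymbol{x}}_{-i}) = s$ when we move agent $i$'s report from $x_i'$ up to $x_i$ (note $x_i > s$, and one checks $x_i$ relative to $s'$). So in all cases the deviation can only move the outcome from $s$ to some $s'$ with $s' \le s < x_i$, i.e., strictly farther from — or on the far side of — the true peak. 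The symmetric argument handles $x_i < s$.

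The second, and I expect main, step is to convert "the outcome moved weakly away from $x_i$" into "the agent is weakly worse off," accounting for the capacity/subgame layer. This is exactly where Proposition~\ref{Proposition: Basic prop 2} does the work: it says $u_i^*(\cdot,\boldsymbol{x},k)$ is weakly single-peaked with peak at $x_i$, so if $s' \le s \le x_i$ then $u_i^*(s',\boldsymbol{x},k)\le u_i^*(s,\boldsymbol{x},k)$, and symmetrically on the other side. The subtlety is that Proposition~\ref{Proposition: Basic prop 2} as stated requires both $s,s'$ on the same side of $x_i$ (the hypotheses are $s<s'\le x_i$ or $x_i\le s'<s$); so I need to handle the case where the deviation moves the outcome across $x_i$ — e.g.\ $s\le x_i$ but $s' > x_i$. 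By the case analysis above this does not happen when $x_i > s$ (we showed $s'\le s$), but I should double-check the boundary where $x_i = s$ or where $x_i$ lies strictly between $s$ and a putative $s'$, and argue such configurations are excluded by uncompromising. If any genuinely remain, I would chain through the point $x_i$ itself using weak single-peakedness twice (once on each side), which only needs $u_i^*(s,\boldsymbol{x},k)\ge u_i^*(x_i',\boldsymbol{x},k)$-type comparisons that Proposition~\ref{Proposition: Basic prop 2} supplies.

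Putting the two steps together: for every agent $i$, every deviation $x_i'$, every $\hat{\boldsymbol{x}}_{-i}$ and every $\boldsymbol{x}_{-i}$, the truthful outcome $s$ is weakly closer to $x_i$ (in the sense of lying weakly between $s'$ and $x_i$) than the deviation outcome $s'$, hence by weak single-peakedness of $u_i^*(\cdot,\boldsymbol{x},k)$ the agent's equilibrium utility is weakly higher under truthtelling. This is precisely the \stp condition, and since $k\le n$ was arbitrary the proposition follows. The only real obstacle is the careful bookkeeping in the uncompromising case analysis — in particular ruling out, or routing around, the possibility that a misreport flips the outcome to the opposite side of the true peak — and for that the combination of \eqref{eqaution: uncompromising 1}--\eqref{eqaution: uncompromising 2} applied at both the truthful and the deviating profile should suffice.
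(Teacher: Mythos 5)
Your proposal is correct and takes essentially the same route as the paper: pass from GMM to the uncompromising property via Lemma~\ref{theorem: generalized median}, apply that property at both the truthful and the deviating report profiles to rule out the outcome moving toward the true peak (the clean way to exclude $s'>s$ when $x_i>s$ and $x_i'<s$ is to compare the output at an intermediate report in $[s,s']$ --- exactly the paper's $x_i''\in[s,s']$ step --- rather than at $x_i$ itself, which is the bookkeeping point you flag), and then conclude with the weak single-peakedness of $u_i^*$ from Proposition~\ref{Proposition: Basic prop 2}. The paper merely packages the same ingredients as a proof by contradiction with the two cases $s<s'$ and $s>s'$, so no substantive difference remains.
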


\begin{proof}
Fix $k\le n$ and let $M$ be a GMM. For the sake of a contradiction suppose that $M$ not \stp. That is, for some agent $i$ with location $x_i$, there exist a profile of other agent locations $\boldsymbol{x}_{-i}$, and reports $\hat{\boldsymbol{x}}_{-i}$ such that for some $x_i'\neq x_i$
\begin{align}\label{equation: uncompromising truthful}
u_i^*(M(x_i', \hat{\boldsymbol{x}}_{-i}), \boldsymbol{x}, k)> u_i^*(M(x_i, \hat{\boldsymbol{x}}_{-i}), \boldsymbol{x}, k).
\end{align}
Define $s'=M(x_i', \hat{\boldsymbol{x}}_{-i})$ and $s=M(x_i, \hat{\boldsymbol{x}}_{-i})$. It is immediate from (\ref{equation: uncompromising truthful}) that $s\neq x_i$ and $s\neq s'$. Without loss of generality we assume that $x_i>s$. By assumption, $M$ is a GMM and hence by Lemma~\ref{theorem: generalized median} satisfies the uncompromising property. It follows that $x_i'<s$, since otherwise $x_i' \ge s $ and (\ref{eqaution: uncompromising 1}) would imply $s'=s$ contradicting (\ref{equation: uncompromising truthful}).

Case 1: Suppose $s<s'$. Then $x_i'<s'$ and the uncompromising property (\ref{eqaution: uncompromising 2}) implies that 
$$M(x_i'', \boldsymbol{x}_{-i})=s' \qquad \text{for all } x_i'' \le  s'.$$
If $x_i''\in [s, s']$ the uncompromising property implies that $M(x_i'', \boldsymbol{x}_{-i})=M(x_i, \boldsymbol{x}_{-i})$, i.e., $s'=s$, which contradicts (\ref{equation: uncompromising truthful}). Thus, we conclude that $x_i''<s$.

Now consider a new instance where agent $i$ has true location $y_i=\varepsilon\in (0, s)$, all other agents have true location $y_j=0$ but collectively report $\hat{\boldsymbol{x}}_{-i}$. If agent $i$ reports $y_i=\varepsilon$ then the facility location is $s'$ and $i$ attains utility $1-d(s', \varepsilon)$. If instead agent $i$ reports $y_i'=x_i$ then the facility location is $s<s'$ and $i$ attains strictly higher utility $1-d(s, \varepsilon)$. Thus, the mechanism is not $k$-DIC -- a contradiction. 

Case 2: Suppose $s>s'$. Since $x_i>s>s'$, it follows from the single-peaked property (Proposition~\ref{Proposition: Basic prop 2}) that $u_i^*(s, \boldsymbol{x}, k)\ge  u_i^*(s', \boldsymbol{x}, k)$. This contradicts (\ref{equation: uncompromising truthful}). 
\end{proof}

We now prove our second proposition towards the characterization result. Proposition~\ref{lemma: SP translation} says that, if a mechanism is \stp for some $k<n$ then it is \stp for $k=n$. Thus, the \stp requirement  is more restrictive for $k<n$ than for $k=n$ -- meaning that the capacity constraints induce new strategic concerns for the mechanism designer.

\begin{proposition}\label{lemma: SP translation}
If a mechanism $M$ is \stp, for some $k<n$, then it is \stp for $k=n$. The converse is not true.
\end{proposition}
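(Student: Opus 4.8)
The plan is to prove the contrapositive of the first claim: if $M$ is not \stp for $k=n$, then $M$ is not \stp for the given $k<n$. Recall that when $k=n$ every agent is always served, so the ex-post utility is simply $u_i^*(s,\boldsymbol{x},n)=1-d(s,x_i)$, which is strictly single-peaked with peak at $x_i$. Hence $M$ failing to be \stp at $k=n$ means there is an agent $i$, a true location $x_i$, a report profile $\hat{\boldsymbol{x}}_{-i}$ of the others, and a misreport $x_i'$ such that, writing $s=M(x_i,\hat{\boldsymbol{x}}_{-i})$ and $s'=M(x_i',\hat{\boldsymbol{x}}_{-i})$, we have $d(s',x_i)<d(s,x_i)$; in particular $s\neq x_i$, so without loss of generality $x_i>s$, and $s'$ is strictly closer to $x_i$ than $s$ is, so $s<s'$ (with $s'\le 1$).

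Next I would convert this reporting-stage profitable deviation into one for general $k<n$ by constructing an instance where agent $i$'s ``$k$-closest'' status is forced regardless of the facility location among the relevant range. The idea: keep agent $i$'s true location at $x_i$, and choose the true locations $\boldsymbol{x}_{-i}$ of the other $n-1$ agents to be far from both $s$ and $s'$ — e.g. cluster them at a point $z$ chosen so that every agent $j\neq i$ satisfies $d(t,z)>d(t,x_i)$ for every $t$ in the interval $[s,s']$ (which is possible because this interval is bounded away from $x_i$ only on one side; more carefully, place the others at the endpoint of $[0,1]$ farthest from $[s,s']$, or exploit the fact that $s<s'\le x_i$ so that locations to the left of $s$ all have agent $i$ strictly farther than... ) — the cleanest choice is to put all other agents at a single location $z$ with $z$ so extreme that for all $t\in\{s,s'\}$ agent $i$ is among the $k$ closest to $t$; since there are $n-1$ other agents all at the same spot and $k\ge 1$, if agent $i$ is strictly closer to $t$ than $z$ is, agent $i$ is guaranteed to be the single closest agent and hence served. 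The others still \emph{report} $\hat{\boldsymbol{x}}_{-i}$, which is legitimate since the \stp condition quantifies over all $\hat{\boldsymbol{x}}_{-i}$ and all true $\boldsymbol{x}_{-i}$. Then in this instance, truthful reporting by $i$ yields facility $s$ with $i$ served, giving utility $1-d(s,x_i)$, while reporting $x_i'$ yields facility $s'$ with $i$ served, giving utility $1-d(s',x_i)>1-d(s,x_i)$. Hence $M$ is not \stp for this $k<n$, proving the contrapositive.

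The main obstacle is purely bookkeeping: ensuring that the true locations $z$ of the other agents can simultaneously be chosen so that (a) they lie in $[0,1]$, and (b) agent $i$ is guaranteed to be served (i.e.\ among the $k$-closest under $\triangleright$) at \emph{both} candidate facility locations $s$ and $s'$. Since $x_i>s$ and $s'\in(s,x_i]$, both $s$ and $s'$ lie in $[0,x_i]\subseteq[0,1)$; placing all other agents at $z=1$ works provided $d(1,s)>d(1,x_i)$ and $d(1,s')>d(1,x_i)$, i.e.\ $1-s>1-x_i$ and $1-s'>1-x_i$, both of which hold since $s<s'\le x_i$ with at least $s<x_i$ strict — and if $s'=x_i$ then $i$ is exactly at $s'$ so $d(1,s')=1-x_i$, a tie, which we break by noting we may instead take $z=1$ but perturb, or simply observe $s'=x_i$ already makes $i$ weakly closest, and the deterministic tie-break can be absorbed by an arbitrarily small perturbation of $z$ away from $1$. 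For the ``converse is not true'' part, I would simply point to Example~\ref{Example: DIC hard}: the mechanism there is \stp for $k=n$ but not \stp for any $k<n$, which directly witnesses that the implication does not reverse.
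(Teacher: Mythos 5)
There is a genuine gap in the reduction step: from $d(s',x_i)<d(s,x_i)$ and $x_i>s$ you correctly get $s<s'$, but you then assert $s'\in(s,x_i]$, which does not follow --- the better location $s'$ may lie on the \emph{far} side of $x_i$ (any $s'$ with $|s'-x_i|<x_i-s$ qualifies, including $s'>x_i$). In that straddling case your construction with all other agents truly located at $z=1$ can fail: take $x_i=1/2$, $s=0$, $s'=9/10$, and the $n-1$ others at $1$. Then $d(1,s')=1/10<2/5=d(x_i,s')$, so with any $k<n$ agent $i$ is \emph{not} among the $k$-closest at $s'$; the misreport yields utility $0$ while truth yields $1-d(s,x_i)>0$, and no violation of \stp is exhibited. (A smaller slip in the same spirit: the inequalities you verify, $d(1,s)>d(1,x_i)$ etc., are not the serving condition --- what you need is $d(x_i,s)<d(1,s)$, i.e.\ $x_i<1$ when $s<x_i$ --- though in your assumed case the conclusion happens to hold modulo the $x_i=1$ tie.)

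The gap is repairable, and it is worth noting how. First, you do not actually need agent $i$ to be served at $s$: the truthful utility is at most $1-d(s,x_i)$ whether or not $i$ is served, so it suffices to guarantee that $i$ is served at $s'$, which you can do by clustering the others at whichever endpoint is on the opposite side of $s'$ from $x_i$ (e.g.\ $z=0$ when $s'>x_i$), handling the residual tie cases as you do. The paper's proof avoids this case analysis altogether with a cleaner device: it sets \emph{all} true locations equal to $x_i$ (so every agent is equidistant from any facility location) and takes agent $i$ to have top priority under the tie-breaking rule $\triangleright$, so that $i$ is served at every possible facility location; the mechanism's outputs $s,s'$ are unchanged since they depend only on the reports, and the profitable deviation from the $k=n$ failure transfers immediately to every $k<n$. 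The ``converse'' part via Example~\ref{Example: DIC hard} matches the paper.
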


\begin{proof}
We prove the contrapositive. Suppose that $M$ is not \stp for $k=n$. That is, for some agent $i$ with location $x_i$ there exists a report $x_i'$, a profile of other agent reports $\hat{\boldsymbol{x}}_{-i}$, and a profile of other agent locations $\boldsymbol{x}_{-i}$ such that 
\begin{align}\label{equation: sp translation}
u_i^*(M(x_i', \hat{\boldsymbol{x}}_{-i}), \boldsymbol{x}, n)>u_i^*( M(x_i, \hat{\boldsymbol{x}}_{-i}), \boldsymbol{x}, n).
\end{align}
Let $s'=M(x_i', \hat{\boldsymbol{x}}_{-i})$ and $s=M(x_i, \hat{\boldsymbol{x}}_{-i})$. When $k=n$ all agents are served and so (\ref{equation: sp translation}) simplifies to 
\begin{align}\label{Equation: k<n implies n}
1-d(s', x_i)>1-d(s, x_i).
\end{align}

Now we consider the same profile of reports but for an arbitrary $k<n$. Furthermore, suppose all agents have location equal to $x_i$ and agent $i$ has highest priority ($\triangleright$), i.e., after tie-breaking. The mechanism output is independent of agent true locations and so we still attain $M(x_i', \hat{\boldsymbol{x}}_{-i})=s'$ and $M(x_i', \hat{\boldsymbol{x}}_{-i})=s$. Furthermore, since $i$ has highest priority (recall that the priority is distance-based but in this instance  all agents are equidistant for every facility location) they are always served for every facility location.  In particular, the utility from reporting truthfully is $1-d(s, x_i)$ and misreporting is $1-d(s', x_i)$ -- the latter provides strictly higher utility, as per (\ref{Equation: k<n implies n}). We conclude that the mechanism is not \stp, and since $k<n$ was chosen arbitrarily it holds for all $k<n$. 

The final statement in the proposition was shown in Example~\ref{Example: DIC hard}.
\end{proof}

We now prove our third and final proposition, which completes the characterization result. Proposition~\ref{theorem: capacitated SP equivalence} says that, if a mechanism is \stp for some $k<n$ then it is a GMM.

\begin{proposition}\label{theorem: capacitated SP equivalence}
If a mechanism $M$ is \stp, for some $k<n$, then it is a GMM. 
\end{proposition}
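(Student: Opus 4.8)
The plan is to establish the contrapositive by showing that any $M$ which is \stp for some $k<n$ must be \emph{uncompromising}; the conclusion that $M$ is a GMM is then immediate from Lemma~\ref{theorem: generalized median}. So suppose, for contradiction, that $M$ is \stp for some fixed $k<n$ but is not uncompromising. Since the metric, the distance-based priority $\triangleright$, and the capacity constraint are all invariant under the reflection $x\mapsto 1-x$, it suffices (relabelling by this reflection if necessary) to rule out a failure of~(\ref{eqaution: uncompromising 1}); that is, I may assume there is a profile $\boldsymbol{x}$, an agent $i$ with $x_i>s:=M(\boldsymbol{x})$, and a report $x_i'\ge s$ with $M(x_i',\boldsymbol{x}_{-i})=s'\neq s$. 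The failure of~(\ref{eqaution: uncompromising 2}) is handled by the mirror-image argument.

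The first step is to pin down the location of $s'$. By Proposition~\ref{lemma: SP translation}, $M$ is also \stp for $k=n$; and since every agent is served when $k=n$, so that $u_i^*(t,\boldsymbol{x},n)=1-d(t,x_i)$, being \stp for $k=n$ just says that for every agent, and every fixed profile of others' reports, truthful reporting yields an outcome at least as close to the agent's true location as any misreport. Applying this at $\boldsymbol{x}$ with true location $x_i$ and misreport $x_i'$ gives $d(s,x_i)\le d(s',x_i)$, which rules out $s<s'\le x_i$ (since $x_i>s$, that would force $s'\le s$). Applying it instead at the profile $(x_i',\boldsymbol{x}_{-i})$, which has output $s'$ and outputs $s$ when agent $i$ deviates to $x_i$, with true location $x_i'$ and misreport $x_i$, gives $d(s',x_i')\le d(s,x_i')$, which rules out $s'<s$ (since $x_i'\ge s>s'$, that would force $s\le s'$). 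As $s'\neq s$, the only surviving possibility is $s<x_i<s'$.

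The remaining case $s<x_i<s'$ is where the capacity constraint does the real work and is the main obstacle: this is exactly the behaviour of the dictatorial-style mechanism of Example~\ref{Example: DIC hard}, which is \stp for $k=n$ but not a GMM, so the $k=n$ argument alone cannot exclude it. To exclude it using $k<n$, I would consider the instance in which agent $i$ truly sits at $x_i$, every other agent truly sits at $s$, the others report $\boldsymbol{x}_{-i}$, and agent $i$ chooses its report. Reporting truthfully places the facility at $M(x_i,\boldsymbol{x}_{-i})=s$; since $x_i>s$ while the $n-1\ge k$ other agents are at distance $0$ from $s$, agent $i$ is not among the $k$-closest, so by Proposition~\ref{Proposition: Basic prop 1} agent $i$ gets utility $0$. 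Reporting $x_i'$ places the facility at $M(x_i',\boldsymbol{x}_{-i})=s'$; since $s'>x_i>s$, agent $i$ is at distance $s'-x_i$ from $s'$ whereas every other agent is at distance $s'-s>s'-x_i$, so agent $i$ is the unique closest agent and, as $k\ge 1$, is served with utility $1-(s'-x_i)$, which is strictly positive because $x_i>s\ge 0$ forces $s'-x_i<s'\le 1$. Hence agent $i$ strictly profits by misreporting $x_i'$, contradicting that $M$ is \stp for $k$.

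This contradiction shows $M$ is uncompromising, hence a GMM by Lemma~\ref{theorem: generalized median}. The two places needing care are (i) the reflection reduction at the outset and (ii) verifying that the subgame equilibria in the last step yield the claimed utilities; but (i) is routine and (ii) is exactly Proposition~\ref{Proposition: Basic prop 1}, so the substance of the proof is the single case $s<x_i<s'$ and the distance inequalities extracted from $k=n$-incentive-compatibility.
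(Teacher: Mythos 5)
Your proof is correct, and there is no circularity in invoking Proposition~\ref{lemma: SP translation}, which the paper establishes before and independently of this statement. The endpoint is the same as the paper's---show that $M$ is uncompromising and then apply Lemma~\ref{theorem: generalized median}---but your route is organized differently. You first export the hypothesis to the uncapacitated setting via Proposition~\ref{lemma: SP translation} and use two instantiations of the $k=n$ incentive constraint (one with $x_i$ as the truth and $x_i'$ as the misreport, and one with the roles swapped at the profile $(x_i',\boldsymbol{x}_{-i})$) to reduce any putative violation to the single configuration $s<x_i<s'$; a single capacity-constrained instance, with the other agents truly located at $s$ while still reporting $\boldsymbol{x}_{-i}$, then gives the contradiction, since agent $i$ is unserved at $s$ (the $n-1\ge k$ others are at distance zero) but is uniquely closest to $s'$ and so served with utility $1-(s'-x_i)>0$. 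The paper instead argues entirely with $k<n$ instances: its Case~1 is exactly your final instance, but since it has not pre-narrowed where the deviant output can lie, it needs a second configuration with the others truly at $1$, a further auxiliary deviation by an agent with true location $x_i''$, and explicit tie-breaking care when $x_i=1$. Your version buys brevity and avoids all tie-breaking caveats (in your instances agent $i$ is always strictly closer or strictly farther than everyone else), at the cost of leaning on the earlier proposition; the paper's version is self-contained at the level of this proposition. Your ``mirror-image'' remark for the symmetric half of the uncompromising condition is acceptable as stated, precisely because your argument never uses the equidistant tie-break and so repeats verbatim with all inequalities reversed.
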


\begin{proof}
Let $M$ be a mechanism that is \stp for some $k<n$. 

First, consider an instance where an arbitrary agent $i$ has location $x_i$, and the other agents report $\hat{\boldsymbol{x}}_{-i}$. If $i$ reports truthfully the mechanism outputs some location that we denote as $s$, i.e., 
\begin{align}\label{Equation: truthful report DIC capacatited leads to GMM s}
s:=M(x_i, \hat{\boldsymbol{x}}_{-i}).
\end{align}
If $s=x_i$ then consider an alternate location and profile of other agents' reports so that the equality does not hold. If no such location and report profile exists then the mechanism always coincides with agent $i$'s report; that is, the mechanism is the agent $i$ dictatorship mechanism, which is a GMM.


Now suppose $s\neq x_i$, and without loss of generality assume $s<x_i$. By assumption $M$ is \stp, for some $k<n$, and so it must be that for all $x_i'$
\begin{align}\label{equation: sp translationx1}
u_i^*( s, \boldsymbol{x}, k)\ge u_i^*(M(x_i', \hat{\boldsymbol{x}}_{-i}), \boldsymbol{x}, k),
\end{align}
where $\boldsymbol{x}$ denotes the location profile of all agents. 

We now show that deviations by agent $i$ satisfy the uncompromising property, i.e., for any $x_i'\ge s$ $M(x_i', \hat{\boldsymbol{x}}_{-i})=s$. To do so, we analyze different cases and sequential refine the possible values of $M(x_i', \hat{\boldsymbol{x}}_{-i})$, we the derive a contradiction to eventually conclude that $M(x_i', \hat{\boldsymbol{x}}_{-i})=s$.\\

\underline{Case 1:} Suppose all other agents have location $s$. When agent $i$ truthfully reports $x_i$ the facility location is $s$ and they attain zero utility. Now consider some report $x_i' \ge s$, leading to facility location 
$$s_{x_i'}:=M(x_i', \hat{\boldsymbol{x}}_{-i}).$$
If $s_{x_i'}\in (\frac{s+x_i}{2}, 1]$ for any $x_i'\ge s$ we attain a contradiction, since this agent $i$ would be served from this report and attain strictly more utility than being truthful. We conclude that 
$$s_{x_i'}\in [0,s)\cup \{s\}\cup (s, \frac{s+x_i}{2}) \qquad \text{ for all } x_i'\ge s.$$

\underline{Case 2:} Suppose all other agents have location $1$, noting that $s<x_i\le 1$. In the event that $x_i=1$ (in which case all agents are equidistant from every facility location), assume agent $i$ has the highest priority in the tie-breaking rule ($\triangleright$). When agent $i$ truthfully reports their location they are served and attain utility $1-d(s, x_i)$. To avoid a contradiction of (\ref{equation: sp translationx1}), it must be that $s_{x_i'}\le s$. Thus, we conclude
\begin{align*}
s_{x_i'}&\in [0,s)\cup \{s\}&& \text{ for all } x_i'\ge s.
\end{align*}

For the sake of a contradiction suppose there exists some $x_i''\ge s$ such that
\begin{align}\label{equation: sp translationx3}
s_{x_i''}&\in [0,s).
\end{align}
Consider a new instance where agent $i$'s location is $y_i=x_i''$ (note that $x_i''\ge s$), all other agents have location $1$, and the other agents report $\hat{\boldsymbol{x}}_{-i}$ (the same profile of reports as per (\ref{equation: sp translationx1})). In the event that $y_i=x_i''=1$ (in which case all agents are equidistant from every facility location), assume agent $i$ has the highest priority in the tie-breaking rule ($\triangleright$). If agent $i$ reports their location $y_i$ the facility location is $s_{y_i}=s_{x_i''}<s$, as per (\ref{equation: sp translationx3}), and they attain utility $1-d(s_{y_i}, y_i)$. But now misreporting to $y_i'=x_i$ then as per (\ref{Equation: truthful report DIC capacatited leads to GMM s}) the facility location is $s$ where 
$$s_{y_i}< s \le y_i,$$
leading to utility $1-d(s, y_i)$. This is a contradiction of the mechanism being \stp, since $d(s, y_i)<d(s_{y_i}, y_i)$; that is, agent $i$ by reporting $y_i'$ instead of their true location $y_i$  attains strictly higher utility. We conclude that $s_{x_i'}=s$ for all $x_i'\ge s$. Thus, the mechanism is uncompromising and hence a GMM.
\end{proof}


\section{Approximation of \stp mechanisms}\label{Section: approxim of dic}


Given the characterization result (Theorem~\ref{Corollary: equivalence results}) of the previous section, there is no distinction between the family of mechanisms that are \stp for some $k<n$, and the family of mechanisms that are \stp for all $k\le n$: both families are equal to the GMM family. Accordingly, we will now simply refer to a mechanism as being \stp.


\subsection{Optimal mechanism is not \stp}

We first show that, in general for $k<n$, the optimal mechanism is not \stp. Note that this result contrasts with the $k=n$ setting where the median mechanism is both optimal and \stp (Remark~\ref{Remark: special case}).

\begin{theorem}\label{Theorem: optimal not dic}
The optimal mechanism is \stp if and only if $k\in \{1,n\}$.
\end{theorem}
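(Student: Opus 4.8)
The claim has two directions. The easy direction is that for $k\in\{1,n\}$ the optimal mechanism is \stp. For $k=n$ this is Remark~\ref{Remark: special case}: the median mechanism is optimal and \stp. For $k=1$, the optimal mechanism serves exactly one agent; since the $k$-closest agent is served regardless, locating the facility exactly at a reported location that maximizes welfare is in fact achievable, and moreover I would argue the optimal mechanism here coincides with a dictatorship-type / constant-type GMM and invoke Proposition~\ref{proposition: uncompromise connections}. The cleanest route is to exhibit a specific optimal mechanism for $k=1$ that is a GMM — e.g. the mechanism that locates the facility at the leftmost reported location is a GMM (set $a_S=0$ whenever $S=N$, appropriately) and one checks it is optimal when $k=1$ because with a single served agent the welfare-maximizing choice is to serve some agent at their own location, and with the distance-based tie-break the served agent is the one at the facility, so serving the leftmost agent at distance $0$ gives welfare $1$, which is optimal. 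Then Proposition~\ref{proposition: uncompromise connections} gives DIC. (If the leftmost-agent mechanism is not optimal, then some other specific GMM is; the key point is only that \emph{an} optimal mechanism that is a GMM exists.)

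The substantive direction is: for $1<k<n$, \emph{no} optimal mechanism is \stp. By Theorem~\ref{Corollary: equivalence results}, it suffices to show that no GMM is optimal for such $k$; equivalently, every optimal mechanism fails to be uncompromising. The plan is to construct, for each fixed $k$ with $1<k<n$, a location profile $\boldsymbol{x}$ together with a unilateral deviation by some agent $i$ that any optimal mechanism must respond to by moving the facility across agent $i$'s report — violating \eqref{eqaution: uncompromising 1}–\eqref{eqaution: uncompromising 2}. A natural construction: cluster $k$ agents tightly near $0$ and $n-k$ agents tightly near $1$, and place a pivotal agent $i$ somewhere in the middle. Because only $k$ agents can be served, the optimal facility location trades off serving the left cluster versus the right cluster plus agent $i$; shifting agent $i$'s reported location changes which side the optimum favors and hence forces the optimal facility location to jump from near $0$ to near $1$ (or vice versa) as $i$'s report crosses a threshold. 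Since $1<k<n$ both clusters are nonempty and agent $i$ is genuinely pivotal, so this jump genuinely occurs; a careful choice of the cluster positions and of $i$'s two reports $x_i'\ge s$ and the witnessing profile makes the optimal location change value, contradicting uncompromisingness. One then notes that any optimal mechanism must pick these optimal locations (up to the welfare-indifference ties, which can be ruled out by perturbing the construction so the optimum is unique), so no optimal mechanism is uncompromising.

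The main obstacle is making the pivotal construction robust: I must ensure (i) the optimal facility location is \emph{unique} on both sides of the threshold, so that "the optimal mechanism" is forced to the specific jumping values and cannot escape through tie-breaking among welfare-equal locations; (ii) the two reports of agent $i$ that I use to witness the violation of \eqref{eqaution: uncompromising 1} or \eqref{eqaution: uncompromising 2} actually straddle the output $s$ in the required way; and (iii) the argument works uniformly for every $k$ strictly between $1$ and $n$, including edge cases like $k=n-1$ (only one agent in the right cluster) and $k=2$. I would handle (i) by using $k$ agents at $0$ and $n-k-1$ agents at $1$ plus the pivot, and computing the welfare $\Pi^*$ explicitly as a function of the facility location and the pivot's position, exhibiting a sign change in the derivative/comparison; I would handle uniformity by keeping the construction symmetric in a way that only depends on the nonemptiness of both clusters. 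I do not expect the welfare computations themselves to be hard — they are piecewise-linear in the facility location given the distance-based tie-break — but bookkeeping the tie-break priorities when the pivot is equidistant from competing agents is the delicate part, and I would sidestep it by keeping the pivot strictly between the clusters and at generic (irrational-ratio) distances.
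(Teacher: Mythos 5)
Your reduction of the hard direction to ``no optimal mechanism is uncompromising'' via Theorem~\ref{Corollary: equivalence results} is sound and genuinely different from the paper, which never invokes the characterization and instead constructs an explicit profitable misreport; the $k\in\{1,n\}$ direction is also fine (though to realize the leftmost-report mechanism as a GMM you want $a_S=0$ for the nonempty sets, e.g.\ the singletons, not for $S=N$, which yields the rightmost report). The genuine gap is in your pivotal construction. With a cluster of $k$ agents at a single point near $0$ and capacity $k$, locating the facility there already attains welfare (essentially) $k$, the global maximum, so no report of a mid-interval pivot can ever move the optimum: no violation of (\ref{eqaution: uncompromising 1})--(\ref{eqaution: uncompromising 2}) is generated. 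If you spread the left cluster slightly to make it beatable, the competing configuration on the right must come within that tiny margin of $k$, which requires (almost) $k$ agents concentrated near one point there; with only $n-k-1$ agents at $1$ plus a pivot ``somewhere in the middle'' (worth at most about $1/2$ when served from near $1$) this is impossible once $k>n/2$, and when instead $n-k-1\ge k$ the right side reaches welfare $k$ on its own and the pivot is again irrelevant. The missing device is exactly the paper's: two near-tied configurations, each a full group of size $k$ with one slightly displaced member who is still served, so that the candidate welfares are $k-\varepsilon$ versus $k-2\varepsilon$ and an $O(\varepsilon)$ move of the displaced (served) agent reverses the comparison while that agent still cares about the outcome. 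A pivot parked in the middle with full-size extreme clusters is never pivotal in this sense.

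This also defeats your claim of uniformity over all $1<k<n$. For $k>n/2$ no location near $1$ can serve more than $n-k<k$ nearby agents, so two extreme clusters can never be near-tied and the flip you rely on cannot occur; a different construction is needed in that regime (note that the paper's own construction likewise uses two full groups of size $k$, so the clean case is $k\le n/2$, and the large-$k$ regime is where extra work is genuinely required). Some qualification is in fact unavoidable: for $(n,k)=(3,2)$ the median mechanism is itself optimal (its welfare $2-\min(b-a,c-b)$ matches the optimum for every sorted profile $a\le b\le c$) and it is \stp, so no single construction can cover every $1<k<n$ and small cases must be treated separately. Finally, even where a flip can be engineered, you still must force uniqueness of the optimum on both sides of the threshold and keep both of the deviating agent's reports on one side of the original output; you flag this, but it is precisely the bookkeeping that the paper's explicit $\varepsilon$-displacements settle and that your sketch leaves open.
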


\begin{proof}
The backward direction of the theorem statement is straightforward: If $k=1$ then for any $i\in N$ the agent $i$ dictator mechanism, where the mechanism output always coincides with agent $i$'s report, is both optimal and \stp. This is trivial and we do not provide further details.  If $k=n$ then the median mechanism is both optimal and \stp. This result has long been known and can be found in~\citet{Blac48,Moul80,PrTe13}.

We now prove the forward direction using the contrapositive. Let $k\notin \{1,n\}$ and partition the agent into $\floor{n/k}$ groups of size $k$, denoted by $N_{t}$ for $t=1, 2, \ldots, \floor{n/k}$, and one group of size $n-\floor{n/k}$, denoted by $N_{\floor{n/k}+1}$.  We now identify $\floor{n/k}+1$ locations in $[0,1]$, let 
$$y_t=  \frac{t}{\floor{n/k}+1}\qquad \text{ for } t=1, 2, \ldots, \floor{n/k}+1.$$

Consider a scenario such that for each $t=1, 2, \ldots, \floor{n/k}+1$, all but one agent in $N_t$ is located at $y_t$ and a single agent is located at $y_t- t\, \varepsilon$ for some sufficiently small $\varepsilon>0$. In each instance denote the single agent located at $y_t- t\, \varepsilon$ by $i_t\in N_t$.

In this scenario it is immediate the optimal welfare is attained by locating the facility at location $y_1$, leading to a social welfare of $k-\varepsilon$ and agent $i_1$ attain utility $1-\varepsilon$.

Now in a new scenario where agent $i_1$ is located at $y_1-3\varepsilon$ the optimal mechanism must locate the facility at $y_2$. In this case agent $i_1$ attains utility zero. However, if agent $i_1$ misreport their location to $y_1-\varepsilon$ then (as shown above) the facility location will be $y_1$ and they will attain strictly higher utility $1-\varepsilon$. That is, the optimal mechanism is not \stp for $k\notin \{1,n\}$.
\end{proof}

Despite Theorem~\ref{Theorem: optimal not dic} stating a stark impossibility result, we note that absent strategic manipulations by the agents the optimal mechanism can  be efficiently computed. Remark~\ref{Remark: Computational result} says that,  for any $k\le n$ the optimal mechanism's output and corresponding welfare can be computed in polynomial time.

\begin{remark}\label{Remark: Computational result}
The optimal facility location and welfare can be computed in polynomial time for any $k\le n$.
\end{remark}

We sketch an informal argument for Remark~\ref{Remark: Computational result}. Order the agents $i\in  N$ such that $x_i\le x_{j}$ if and only if $i\le j$. It is straightforward to show that an optimal solution has two features (1) the facility serves a  \emph{contiguous} set of $k$ agents, i.e., if agent $i$ and $i+2$ are served then agent $i+1$ is served, and (2) the facility is located at the median of these $k$ served agents. Given these features, it is immediate that a polynomial-time procedure exists by simply comparing the welfare produced by, the at most $n$, sets of $k$ contiguous agents.  


\subsection{Lower bound on \stp approximation}

Utilizing the characterization result of \stp mechanisms via the family of GMMs, we provide a lower bound on the approximation ratio for all \stp mechanisms. 

Theorem~\ref{theorem: uncompromising at best 2} shows that at best a \stp mechanisms provides a $2\frac{k}{k+1}$-approximation when $k\le \ceil{(n-1)/2}$, and otherwise provides at best an $\max\{\frac{n-1}{k+1},1\}$-approximation. This lower bound on the approximation ratio is illustrated in Figure~\ref{figure: illustration DIC lower boundxxy}.

\begin{theorem}\label{theorem: uncompromising at best 2}
Let $n\ge 2$. A \stp mechanism is at best an $\alpha$-approximation with $\alpha=2\frac{k}{k+1}$ when $1\le k\le \ceil{(n-1)/2}$, and $\alpha=\max\{\frac{n-1}{k+1},1\}$ otherwise. 
\end{theorem}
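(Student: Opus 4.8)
The plan is to establish the lower bound by exhibiting, for each value of $k$ in the stated ranges, a family of location profiles on which \emph{every} \stp mechanism—equivalently, by Theorem~\ref{Corollary: equivalence results}, every GMM—must perform poorly relative to the optimal social welfare. The key leverage is the \emph{uncompromising} characterization (Lemma~\ref{theorem: generalized median}): if a GMM outputs location $s$ on some profile, then an agent located strictly on one side of $s$ cannot move $s$ by reporting anywhere on that same side of $s$. So the strategy is to build a profile with a large cluster of agents and then argue that the GMM's output is ``pinned'' near the cluster (or elsewhere) independently of a second group, forcing either the served set or the facility position to be suboptimal.

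First I would handle the regime $k \le \ceil{(n-1)/2}$, where the target ratio is $2\frac{k}{k+1}$. Here I would place $k$ agents very near one point $p$ (say at $p$ and at $p - t\varepsilon$ perturbations, as in the proof of Theorem~\ref{Theorem: optimal not dic}) so the optimum serves exactly these $k$ agents and locates the facility essentially at $p$, giving optimal welfare $\approx k$. The remaining $n-k \ge k$ agents I would split symmetrically (or place so as to ``surround'' any possible output) so that wherever a GMM locates the facility, the $k$ agents actually served are spread out enough that total served welfare is at most roughly $k \cdot \frac{1}{2} \cdot \frac{k+1}{k}$ in the worst case—this is where the factor $\frac{k+1}{2k}$ (reciprocal of $2\frac{k}{k+1}$) comes from: a served cluster of $k$ agents on a segment of total length near $1$ on one side contributes welfare near $\sum_{j} (1 - \frac{j}{k+1}\cdot 1)$-type sums. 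The uncompromising property is invoked to show the adversary can reposition the non-pinned agents to defeat the mechanism without changing its output. I would then symmetrize over two mirror-image configurations so that no single GMM output can be good on both.

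For the regime $k > \ceil{(n-1)/2}$ with target $\max\{\frac{n-1}{k+1}, 1\}$: when $k \ge n-1$ the bound is $1$ and there is nothing to prove (indeed the median/constant style mechanisms are optimal), so the content is the range $\ceil{(n-1)/2} < k \le n-2$. Here I would again cluster $k$ agents tightly so OPT serves them with welfare $\approx k$, but now place the other $n-k$ agents at a single far point (e.g.\ at $1$ while the cluster sits near $0$). Any GMM is forced—via uncompromising-ness applied to the far minority—either to serve those far agents badly or to move the facility off the big cluster; a short case analysis on where the GMM's output lies relative to the cluster shows the served welfare is at most $\approx k+1 - (n-k)$, hence the ratio $\frac{k}{k+1-(n-k)} = \frac{k}{2k-n+1}$; optimizing the adversary's perturbations sharpens this to $\frac{n-1}{k+1}$.

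The main obstacle I expect is the second step of the first regime: translating ``the GMM output is not controlled by the surrounding $n-k$ agents'' into a clean worst-case bound on the welfare of the served $k$-subset, and doing the symmetrization carefully so that the single adversarial profile works against \emph{all} GMMs simultaneously rather than against a fixed one. Getting the constant exactly $2\frac{k}{k+1}$ (rather than something weaker) requires choosing the surrounding agents' positions precisely at the points $\frac{t}{k+1}$ so that the $k$-closest served set, whatever the facility location, is forced onto a grid whose welfare sums to exactly $\frac{k+1}{2}\cdot\frac{1}{k+1}\cdot$(something)$\,=\,\frac{k+1}{2k}\cdot k$; I would verify this telescoping sum as the one genuine computation, and confirm matching tightness against the median mechanism analysis promised in Figure~\ref{figure: illustration DIC lower boundxxy} to be sure the bound is not improvable.
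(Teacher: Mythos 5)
There is a genuine gap here: a lower bound for the whole \stp class must provide, \emph{for each} mechanism, a profile on which it does badly, and the fixed (or mirror-pair) profiles you describe cannot do this. In your first regime, the constant mechanism located at the cluster point $p$ simply serves the $k$ clustered agents and attains ratio $\approx 1$ on your profile; and the median mechanism, facing surrounding agents on a $\tfrac{1}{k+1}$-grid, serves a contiguous block centred at the median with average distance roughly $1/4$, giving welfare about $3k/4$ and ratio about $4/3$, well short of $2\tfrac{k}{k+1}$. In fact your target of served welfare $k/2$ via the telescoping sum would imply a lower bound of $2$, contradicting the fact that the median (which is \stp) is a $2\tfrac{k}{k+1}<2$ approximation in this range, so that computation cannot be correct: the extremal configuration needs the $k-1$ served agents at distance $\approx 1/2$ from the facility (i.e.\ at the endpoints $0$ and $1$), not spread on a $\tfrac{1}{k+1}$-grid. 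Your second regime has the same defect: a majority cluster of $k$ agents plus a far minority is handled optimally by the median and by a constant mechanism at the cluster, so no GMM is ``forced'' off the cluster; moreover the correct benchmark there has optimal welfare $\ceil{(n-1)/2}$, not $\approx k$, and your intermediate ratio $k/(2k-n+1)$ exceeds $2$ when $k$ is near $n/2$, again contradicting the median's guarantee.

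The paper closes exactly this gap by making the construction adaptive to the mechanism. Start with all $n$ agents at distinct points in a tiny interval around $1/2$, look at $s=M(\boldsymbol{x})$, and then use the uncompromising property (Lemma~\ref{theorem: generalized median}) to relocate agents to the extremes without changing the output: if $s$ lies outside the cluster, move all agents to $1$, so the mechanism's welfare tends to $k/2$ while the optimum is $k$ (ratio $\ge 2$); if $s$ lies inside, move the agents below $s$ to $0$ and those above to $1$, so the mechanism serves at most one agent at distance $\approx 0$ and $k-1$ agents at distance $\approx 1/2$ (welfare $\to (k+1)/2$), while the optimum is $k$ when $k\le \ceil{(n-1)/2}$ and at least $\ceil{(n-1)/2}$ otherwise, which yields exactly $2\tfrac{k}{k+1}$ and $\tfrac{n-1}{k+1}$. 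Your instinct to route everything through the GMM/uncompromising characterization is right, but the step you yourself flag as the main obstacle---one profile defeating all GMMs simultaneously---is precisely the step that must instead be mechanism-dependent, and your specific profiles fail against the median and constant mechanisms.
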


\begin{proof}
Let $M$ be a \stp mechanism, and consider a scenario where all $n$ agents have distinct locations contained in the interval $I=(1/2-1/2\varepsilon, \ 1/2 +1/2\varepsilon)$ for some sufficiently small $\varepsilon>0$. Denote the profile of agent locations by $\boldsymbol{x}$, and the mechanism's corresponding output by $s=M(\boldsymbol{x})$. 

We consider two cases.

\underline{Case 1:} Suppose $s\notin I$ and without loss of generality assume $s<1/2-1/2\varepsilon$. Now suppose that sequentially agents $i=1, 2, \ldots, n$ have their locations changed and kept at $x_i=1$, and consider the sequence of facility locations produced by the mechanism $s_1, s_2, \ldots, s_n$. By the uncompromising property (satisfied by $M$ since it is a GMM) the location of the facility never changes from $s$. That is, $s_n=s$ despite every agent having location at $1$. The optimal social welfare in this scenario is clearly $k$, however, the mechanism provides welfare of 
\begin{align*}
k(1-d(s, 1))&=k\, s< k (1/2-1/2\varepsilon)\rightarrow k/2 &&\text{as $\varepsilon\rightarrow 0$.}
\end{align*}
Thus, the approximation ratio is at best $k/(k/2)=2$.

\underline{Case 2:} Suppose $s\in I$ and without loss of generality assume $s\le 1/2$. Let $\lambda_1, \lambda_2$ be the number of agents with locations strictly less than $s$, and strictly above $s$, respectively. Note that $\lambda_1+\lambda_2\in \{n-1, n\}$. Similar to Case 1, suppose the $\lambda_1$ agents instead had location at $0$ and the $\lambda_2$ agents had location at $1$ -- by the uncompromising property the facility location is unchanged.  

To attain the bound on the approximation ratio we consider two subcases where $k\le \ceil{(n-1)/2}$ and $k>\ceil{(n-1)/2}$.

 In the first subcase ($k\le \ceil{(n-1)/2}$): the optimal welfare is $k$, since either $\lambda_1$ or $\lambda_2$ exceeds $k$ meaning that $k$ agents can be served at either 0 or 1. The mechanism's welfare is at most
\begin{align*}
1+(k-1) (1-d(s,0))&< 1+(k-1) (1/2-1/2\varepsilon)\rightarrow 1/2 + k/2&&\text{as $\varepsilon\rightarrow 0$.}
\end{align*}
Thus, the approximation ratio is at best $k/(1/2+k/2)=2 \, k/(k+1)$.

In the second subcase ($k> \ceil{(n-1)/2}$): the optimal welfare is at worst $\ceil{(n-1)/2}$, i.e., when the facility serves either $\lambda_1$ or $\lambda_2$ agents  (whichever is larger) from location 0 or 1. The mechanism's welfare is at most
\begin{align*}
1+ (k-1)(1-d(0, s))&<k-(k-1)(1/2-1/2\varepsilon)\rightarrow k/2+1/2&&\text{as $\varepsilon\rightarrow 0$.}
\end{align*}
Thus, the approximation ratio is at best $\ceil{(n-1)/2}/(k/2+1/2)$, but 
\begin{align*}
\ceil{(n-1)/2}/(k/2+1/2)&\ge \frac{(n-1)/2}{(k+1)/2}= \frac{n-1}{k+1}.
\end{align*}
Furthermore since $k> (n-1)/2$ it follows that $ \frac{n-1}{k+1}<2$. Of course, this bound is only meaningful when $n-1/k+1 >1$.

We conclude that when $k\le \ceil{(n-1)/2}$ the approximation ratio is at best $2\frac{k}{k+1}$ and otherwise is at best $\max\{\frac{n-1}{k+1},1\}$.
\end{proof}

\subsection{Optimized approximation ratio for \stp Mechanism}

We now  analyze the performance of the median mechanism for general $k\le n$. In instances where $k\in\{1,n\}$, the median mechanism is both optimal mechanism and \stp (Theorem~\ref{Theorem: optimal not dic}). Furthermore, this mechanism is \stp for all $k\le n$ since the median mechanism is a GMM (Theorem~\ref{Corollary: equivalence results}).



%

Theorem~\ref{Theorem: median performance} says that the median mechanism is an $\alpha$-approximation where $\alpha=2\frac{k}{k+1}$ when $k\le \floor{(n+1)/2}$, and $\alpha=\min\{2\frac{k}{k+1}, 1+2\frac{n-k+1}{3k-2n-2}\}$ otherwise. In particular, this means that the median mechanism is optimal among \stp mechanism for $k\le \floor{(n-1)/2}$ since the  approximation-ratio matches the lower bound found in Theorem~\ref{theorem: uncompromising at best 2}. These approximation results are illustrated in Figure~\ref{figure: illustration DIC lower boundxxy}.

\begin{theorem}\label{Theorem: median performance}
Let $n\ge 5$. The median mechanism is an $\alpha$-approximation with $\alpha=2\frac{k}{k+1}$ for $k\le \floor{(n+1)/2}$, and $\alpha=\min\{2\frac{k}{k+1}, 1+2\frac{n-k+1}{3k-2n-2}\}$ otherwise.%
\end{theorem}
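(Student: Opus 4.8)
The plan is to bound the worst-case ratio $\sup_{\boldsymbol{x}}\Pi^*(\boldsymbol{x},k)/\Pi_M(\boldsymbol{x},k)$ for $M$ the median mechanism, so $s=M(\boldsymbol{x})=x_{(m)}$ with $m=\floor{(n+1)/2}$, $x_{(1)}\le\cdots\le x_{(n)}$ the sorted profile, and the served agents the $k$ nearest to $s$. I would use two facts throughout: the crude bound $\Pi^*(\boldsymbol{x},k)\le k$, and (Remark~\ref{Remark: Computational result}) that an optimal facility serves a block of $k$ consecutive agents in sorted order, placed at that block's median. The claimed $\alpha$ then follows from two upper bounds on the ratio: (a) $\Pi^*/\Pi_M\le 2\frac{k}{k+1}$ for every $k\le n$, and (b) $\Pi^*/\Pi_M\le 1+2\frac{n-k+1}{3k-2n-2}$ when $k>\floor{(n+1)/2}$ and $3k-2n-2>0$ (in which case $1+2\frac{n-k+1}{3k-2n-2}=\frac{k}{3k-2n-2}$); (a) already supplies $2\frac{k}{k+1}$ in the complementary cases, so together they give exactly the minimum in the statement.

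For (a) the key lemma is $\Pi_M(\boldsymbol{x},k)\ge\frac{k+1}{2}$ for all $k<n$, whence $\Pi^*/\Pi_M\le k/\frac{k+1}{2}=2\frac{k}{k+1}$; the one remaining case $k=n$ is covered by the known optimality of the median (Remark~\ref{Remark: special case}). To prove the lemma, note that $s$ is some agent's report, so that agent is served at distance $0$ and contributes utility $1$; assume WLOG $s\le\frac12$ (reflect $[0,1]$). At most $m-1$ agents lie strictly below $s$ and at most $n-m$ strictly above. If $k\le m$, then at least $k$ agents lie weakly below $s$ and hence within distance $s\le\frac12$ of $s$, so all $k$ served agents lie within distance $s$ and $\Pi_M\ge 1+(k-1)(1-s)\ge\frac{k+1}{2}$. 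If $m<k\le 2m-1$ — which covers all remaining $k<n$, since $2m-1\ge n-1$ — I would first reduce to the profile with $m-1$ agents at $0$, one at $s$ and $n-m$ at $1$: the uncompromising property fixes $s$ as agents are moved toward their nearest endpoint, and one checks (by tracking which agent enters/leaves the $k$-nearest set) that this does not increase $\Pi_M$; for this profile $\Pi_M=1+(m-1)(1-s)+(k-m)s$, linear in $s$ and minimised over $[0,\tfrac12]$ at $\frac{k+1}{2}$ precisely because $k\le 2m-1$.

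For (b) the bound $\Pi^*\le k$ is too weak, and the useful structural point is that when $k>\floor{(n+1)/2}$ one has $n-k+1\le m$, so every block of $k$ consecutive agents — in particular the optimal one and the served one (the $k$ nearest to $s$ form a consecutive block around position $m$) — contains the median agent. I would then reduce the ratio maximisation, again using uncompromising together with an exchange argument that moves the interior agents of a fixed optimal block to the endpoints $\{0,1\}$, to canonical profiles in which every agent is at $0$, at $s$, or at $1$. Writing $p,q,r$ for the numbers of agents at $0,s,1$ (with $q\ge1$, $p\le m-1$, $p+q\ge m$), both $\Pi_M$ and $\Pi^*$ become explicit piecewise-linear functions of $s\in[0,\tfrac12]$, so $\Pi^*/\Pi_M$ is a ratio of such functions, maximised at $s\in\{0,\tfrac12\}$ or at a breakpoint; optimising the resulting expression over the admissible integers $(p,q,r)$ yields the bound $\frac{k}{3k-2n-2}=1+2\frac{n-k+1}{3k-2n-2}$.

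The main obstacle is the reduction to canonical profiles in (b): unlike the lower-bound construction of Theorem~\ref{theorem: uncompromising at best 2}, where pushing agents to the endpoints is plainly harmless, here $\Pi^*$ is not monotone under moving an agent to an endpoint, so one must show the transformation does not decrease $\Pi^*/\Pi_M$ by following a fixed optimal block through it — the fact that every $k$-block contains the median agent is what keeps this tractable. The remaining difficulty is the routine but lengthy case analysis maximising the ratio of piecewise-linear functions in $(p,q,r,s)$, together with reconciling the two branches of the formula at $k=\floor{(n+1)/2}$ and $k=\floor{(n+1)/2}+1$; the hypothesis $n\ge5$ enters only to rule out small-$n$ degeneracies there.
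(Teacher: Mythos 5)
Your part (a) is essentially the paper's first step and is fine: you lower-bound the median mechanism's welfare by $(k+1)/2$ and compare with the crude bound $\Pi^*(\boldsymbol{x},k)\le k$. Your decision to treat $k=n$ separately via the known optimality of the median is actually more careful than the paper's own write-up, since $\Pi_M(\boldsymbol{x},k)\ge (k+1)/2$ can fail when $k=n$ and $n$ is even (e.g.\ locations $0,0,0.3,1,1,1$), although the ratio there is $1$ so the theorem is unaffected.

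The genuine gap is in (b), and it is exactly the step you yourself flag as ``the main obstacle'': the reduction to canonical profiles with all agents at $0$, $s$, or $1$ is never established. Moving interior agents of a fixed optimal block to the endpoints changes $\Pi^*$ in an uncontrolled way (the fixed block's welfare drops and the true optimum may switch to a different block), so nothing in your sketch rules out that the ratio $\Pi^*/\Pi_M$ \emph{decreases} under these moves; if it can decrease, the worst case is lost and the reduction gives no upper bound for the original profile. The concluding optimisation over $(p,q,r,s)$ is likewise only asserted to come out to $k/(3k-2n-2)$. The paper avoids the extremal reduction entirely with a short direct comparison that your proposal is missing: since the agents served under the median mechanism are the welfare-maximizing $k$-set for the location $s_m$, one has $\Pi_M(\boldsymbol{x},k)\ge\sum_{i\in N^*}\bigl(1-d(s_m,x_i)\bigr)=\Pi^*(\boldsymbol{x},k)-\sum_{i\in N^*}\bigl(d(s_m,x_i)-d(s^*,x_i)\bigr)$, where $N^*$ and $s^*$ are the optimal served set and location. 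Splitting $N^*$ at $s_m$ into $N_1^*$ (weakly left) and $N_2^*$ (weakly right), each of size at most $\floor{(n+1)/2}$ because $s_m$ is the $\floor{(n+1)/2}$-th order statistic, the differences are at most $s_m-s^*<0$ on one side and at most $s^*-s_m\le 1$ on the other, so $\Pi_M\ge\Pi^*-(|N_2^*|-|N_1^*|)\ge\Pi^*-(n+1-k)$; combining with $\Pi^*(\boldsymbol{x},k)\ge k/2$ gives the bound $1+2\frac{n+1-k}{3k-2n-2}$ whenever $3k-2n-2>0$. To complete your proof you would either have to prove the canonical-profile reduction (which is not obviously true move-by-move) or replace part (b) by a direct argument of this kind.
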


\begin{proof}
Let $n\ge 5$. Throughout the proof let $i_m$ denote the agent with median location (choose the agent arbitrarily if multiple such agents exist), and let $s_m$ denote the median location. The median mechanism provides welfare
\begin{align*}
\Pi_M(\boldsymbol{x}, k)=\max_{N' \in N_{k}} \sum_{i\in N_{k}} (1-d(s_m, x_i))=1+\max_{N' \in N_{k-1, i_m}} \sum_{i\in N_{k-1, i_m}} (1-d(s_m, x_i)),
\end{align*}
where $N_k$ is the set of all $k$-sized subsets of $N$ and $N_{k-1, i_m}$ is the set of all $(k-1)$-sized subsets of $N\backslash \{i_m\}$. This follows since the subset of agents served are always the $k$-closest to the facility location. Hence, given a facility location, the served subset is welfare maximizing. Furthermore, the median location coincides with at least one agent's location, i.e., agent $i_m$.\\

First, we provide an upper bound on the approximation-ratio for all $k$. The median mechanism locates the facility at the $\floor{(n+1)/2}$-th location and hence there are $\floor{(n+1)/2}-1$ agents with locations (weakly) below and $\ceil{(n+1)/2}-1$ with locations (strictly) above. A lower bound on the median mechanism's welfare is attained when the agents below and above the median location at located at 0 and 1, respectively. Thus,  
$$\Pi_M(\boldsymbol{x}, k)\ge 1+ (k-1)\max\{1-d(s_m, 0),\ 1-d(s_m,1)\},$$
and since either $d(s_m,0)\le 1/2$ or $d(s_m, 1)\le 1/2$ it follows that $\Pi_M(\boldsymbol{x}, k)\ge (k+1)/2$. This leads to an upper bound on the approximation-ratio of $k/((k+1)/2)=2\frac{k}{k+1}$ for all $k$, since the optimal welfare is always bounded above by $k$.\\

Now we attain a tighter upper bound for certain values of $k$. To do so, we bound the median welfare using the optimal welfare. Let $s^*$ be the location of the facility under the optimal mechanism. Let $N_m^*$ denote the set of $k$ agents served under the median mechanism,  and let $N^*$ denote the set of $k$ agents served under the optimal mechanism. We have 
\begin{align*}
\Pi_M(\boldsymbol{x}, k)&\ge  \sum_{i\in N^*} \big(1-d(s_m, x_i)\big)\\
& =\sum_{i\in N^*} \Big(1-d(s_m, x_i) -d(s^*, x_i) +d(s^*, x_i) \Big)\\
&=\Pi^*(\boldsymbol{x}, k)-\sum_{i\in N^*} \Big(d(s_m, x_i)-d(s^*, x_i)\Big).
\end{align*}
Clearly, the lower bound is smallest when $s_m\neq s^*$, without loss of generality assume that $s_m<s^*$. Let $N_1^*, N_2^*$ be a partition of $N^*$ such that $|N_1^*|, |N_2^*|\le \floor{(n+1)/2}$ and all agents in $N_1^*$ have location in $[0,s_m]$ and agent in $N_2^*$ have location in $[s_m, 1]$. Such a partition of $N^*$ exists since the location $s_m$ coincides with the $ \floor{(n+1)/2}$ highest location.
Using this partition we further bound the median mechanism's welfare: 
\begin{align*}
\Pi_M(\boldsymbol{x}, k)&\ge  \Pi^*(\boldsymbol{x}, k)-\sum_{i\in N_1^*} (d(s_m, x_i)-d(s^*, x_i))-\sum_{i\in N_2^*} (d(s_m, x_i)-d(s^*, x_i))\\
&\ge  \Pi^*(\boldsymbol{x}, k)- |N_1^*| \max_{x\in [0, s_m]} \Big(s_m- s^*-2x\Big)-|N_2^*| \max_{x\in [s_m,1]}\Big(x_i-s_m-|s^*-x_i|\Big)\\
&\ge  \Pi^*(\boldsymbol{x}, k)-|N_1^*| (s_m-s^*)-|N_2^*| (s^*-s_m)\\
&\ge  \Pi^*(\boldsymbol{x}, k)-(|N_2^*|-|N_1^*|) (s^*-s_m)\\
&\ge  \Pi^*(\boldsymbol{x}, k)-(|N_2^*|-|N_1^*|).
\end{align*}
We now attain our lower bound by considering the maximum value of $|N_2^*|-|N_1^*|$. For $k\le \floor{(n+1)/2}$, the value can only be guaranteed to be no larger than $k$ -- leading to a trivial zero lower on $\Pi_M(\boldsymbol{x}, k)$. However, for $k>\floor{(n+1)/2}$ we attain a more useful bound by noting that
$$(|N_2^*|-|N_1^*|) \le \floor{(n+1)/2}-(k-\floor{(n+1)/2})=2\floor{(n+1)/2}-k\le n+1-k.$$
This lead to an approximation-ratio upper bound of
$$\max_{\boldsymbol{x}\in \prod_{i=1}^n X}\Bigg\{\frac{  \Pi^*(\boldsymbol{x}, k)}{  \Pi^*(\boldsymbol{x}, k)-n-1+k}\Bigg\}=\max_{\boldsymbol{x}\in \prod_{i=1}^n X}\Bigg\{1+\frac{ n+1-k}{  \Pi^*(\boldsymbol{x}, k)-n-1+k}\Bigg\}.$$
Furthermore, for any instance $\Pi^*(\boldsymbol{x}, k)\ge k/2$ since at least as much welfare is attained by locating the facility at $s=1/2$. Thus, an upper bound on the approximation ratio is 
\begin{align*}1+\frac{ n+1-k}{  k/2-n-1+k}=1+2\frac{ n+1-k}{  3k-2n-2}.
\end{align*} \end{proof}

\section{Extension: Location-Allocation Mechanisms }\label{Section: Excludable}


In this section we consider an extension of our framework where the mechanism designer is able to dictate which agents are served by the facility. Note that this extension introduces an underlying assumption that the facility is excludable. In practice, a designer may be able to dictate which agents are served by issuing permits or, when costs are not  prohibitive, checking the identities of agents attempting to benefit from the facility. 


Previously, a mechanism $M \ : \prod_{i\in N} X\rightarrow X$ was defined as a function mapping a profile of locations to a single facility location. In our extension, a mechanism not only locates the facility but also chooses a subset of at most $k$ agents to be served by the facility, if they so choose. We denote these extended mechanisms by 
$$M_A \ : \prod_{i\in N} X\rightarrow X \times N_k,$$
where $N_k=\{A\subseteq N \ : 0<|A|\le k\}$. We call these mechanisms \emph{location-allocation} mechanisms, to distinguish them from the (location-only) mechanisms considered in earlier sections of the present paper. The output of the mechanism is a pair $(s,A)\in X\times N_k$ where $s\in X$ denotes the facility location and $A\in N_k$ denotes the subset of agents allocated to the facility. Abusing notation slightly we will denote the mechanism output from a location profile $\boldsymbol{x}$ by $s_{\boldsymbol{x}}$ and $A_{\boldsymbol{x}}$ where $M_A (\boldsymbol{x})=\big(s_{\boldsymbol{x}},\ A_{\boldsymbol{x}}\big)$. An agent $i\in A$ is guaranteed to be served by the facility if they so choose, whilst an agent $i\notin A$ is never served. 

We omit the details, but it is immediate that the modified subgame $\Gamma_{\boldsymbol{x}}(s, k, A)$ has an essentially unique ex-post Nash equilibrium where all agents $i\in A$ are served by the facility and the remaining agents are not. Thus, we assume that agent $i$ reports their location to the mechanism designer with the understanding that they will be served by the facility if and only if $i\in A$, as per the ex-post Nash equilibrium. Note that the strategyproof concept, \stp, in this section still coincides with the concept used in the earlier sections, albeit with the modified subgame explained above.

We first remark that the revelation principle~\cite{Gibb73} does not apply. A location-only mechanism, based on the profile of agent reports, $\hat{\boldsymbol{x}}$, outputs a facility location $s$ -- that depends on $\hat{\boldsymbol{x}}$ -- and a subset of $k$ agents are then allocated to the facility, via the ex-post Nash equilibrium, $A\subseteq N$ -- this subset depends on the agent true locations $\boldsymbol{x}$ and not the reports $\hat{\boldsymbol{x}}$. In contrast, a location-allocation mechanism outputs both a facility location and an allocation of $\le k$ agents to the facility depending on agent reports $\hat{\boldsymbol{x}}$, and not true locations $\boldsymbol{x}$. Thus, an agent misreporting their location -- in a way that does not affect the facility location -- will never affect whether or not they are served by the facility under a location-only mechanisms. However, under a location-allocation mechanism the agent may potentially benefit from the misreport if they are now allocated to the facility by the mechanism.

We now show that no `reasonable' location-allocation-mechanism is \stp. In particular, we only enforce one criteria, which is a weak form of \emph{anonymity}. Informally speaking, we require that the location-allocation mechanism allocates agents to the facility independently of their label if their report is distinct from all other agents. The usual definition of anonymity is not directly applicable since with a deterministic mechanism, if all agents report identical locations the mechanism must discriminate against at least $n-k$ agents who will not be included in the allocation set $A$.

To formally define our anonymity condition we first introduce the notion of an $i$-identifiable location profile. This is simply a profile where agent $i$ is uniquely identified by their report.

\begin{definition}[$i$-identifiable location profile]Let $i\in N$. A location profile $\boldsymbol{x}$ is \emph{$i$-identifiable} if 
$$x_i\neq x_j \qquad \text{  for  all } j\in N\backslash \{i\}.$$
\end{definition}

We now define our anonymity condition, which we call {\em allocation-anonymous} since the condition only applies to the allocation set rather than the facility location. Informally speaking, the allocation-anonymous condition requires that for every $i$-identifiable location profile, whether or not agent $i$ is allocated to the facility does not depend on $i$'s label. Given that allocation-anonymity only applies to $i$-identifiable location profiles the condition is relatively weak.

\begin{definition}[Allocation-anonymous]\label{allocation anon definition} The mechanism $M_A$ is said to be \emph{allocation-anonymous} if for  every distinct $i,j\in N$ and every $i$-identifiable location profile $\boldsymbol{x}$, the modified profile $\boldsymbol{x}'$ such that $x_\ell=x_\ell' $ for all $\ell\neq i,j$ and
$$x_i'=x_j \qquad \text{ and }\qquad x_j'=x_i,$$
we have 
$$i\in A_{\boldsymbol{x}} \iff j\in A_{\boldsymbol{x}'},$$
where $A_{\boldsymbol{x}}$ is such that $M_A (\boldsymbol{x})=\big(s_{\boldsymbol{x}},\ A_{\boldsymbol{x}}\big)$.
\end{definition}

%

We now show that if we restrict our attention to allocation-anonymous mechanisms there is no \stp location-allocation mechanism.

\begin{theorem}\label{theorem: impossibility of location-allocation}
Let $k<n$, any location-allocation mechanism $M_A$ that is allocation-anonymous is not \stp.
\end{theorem}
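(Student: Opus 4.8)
The plan is to argue by contradiction: suppose $M_A$ is allocation-anonymous and \stp with $k<n$, and derive a contradiction by constructing an instance where some agent strictly benefits from a misreport. The starting point is to pick a location profile $\boldsymbol{x}$ in which all $n$ reports are distinct, so that the profile is $i$-identifiable for every $i\in N$. Since $|A_{\boldsymbol{x}}|\le k<n$, there is at least one agent, say $i$, with $i\notin A_{\boldsymbol{x}}$, and at least one agent $j$ with $j\in A_{\boldsymbol{x}}$. Intuitively, agent $i$ is currently unserved, and I want to show that by mimicking the report of some served agent, agent $i$ can get itself allocated to the facility, thereby gaining positive utility (provided the facility location is not distance exactly $1$ from $i$, which we can arrange by choosing the profile inside a small interval away from the endpoints).

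The key step is to use allocation-anonymity to transfer ``being allocated'' from $j$ to $i$. Consider swapping the reports of $i$ and $j$: define $\boldsymbol{x}'$ with $x_i'=x_j$, $x_j'=x_i$, and all other coordinates unchanged. Because $\boldsymbol{x}$ is $i$-identifiable and $j\in A_{\boldsymbol{x}}$, allocation-anonymity gives $i\in A_{\boldsymbol{x}'}$. Now interpret $\boldsymbol{x}'$ strategically: let the true location of agent $i$ be $x_i$ (its original location), but have agent $i$ report $x_j$, while every other agent $\ell\neq i$ reports $x_\ell'$ — that is, agent $j$ reports $x_i$ and all others report their original locations. Under truthful reporting in the profile $\boldsymbol{x}$ (true locations $= $ reports $=\boldsymbol{x}$) agent $i$ is unserved and gets utility $0$ (here one must check that in the ex-post equilibrium of $\Gamma_{\boldsymbol{x}}(s_{\boldsymbol{x}},k,A_{\boldsymbol{x}})$ agent $i$, being outside $A_{\boldsymbol{x}}$, indeed gets $0$). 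Under the misreport, the report profile is $\boldsymbol{x}'$, so the mechanism outputs $(s_{\boldsymbol{x}'},A_{\boldsymbol{x}'})$ with $i\in A_{\boldsymbol{x}'}$, and agent $i$ — whose true location is still $x_i$ — is served, obtaining utility $1-d(s_{\boldsymbol{x}'},x_i)$. If we choose the original profile $\boldsymbol{x}$ to lie in a tiny subinterval around, say, $1/2$, then $s_{\boldsymbol{x}'}$ is forced (by a separate small argument, or simply by noting $s_{\boldsymbol{x}'}\in[0,1]$ and $x_i$ near $1/2$) to satisfy $d(s_{\boldsymbol{x}'},x_i)<1$, so this utility is strictly positive. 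This is a profitable deviation: agent $i$ changes its report from $x_i$ to $x_j$, holding $\hat{\boldsymbol{x}}_{-i}$ fixed at the appropriate values, and strictly improves — contradicting \stp.

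The main obstacle, and the place requiring care, is the bookkeeping of ``what the other agents report'' versus ``what the other agents' true locations are,'' since the \stp definition quantifies over both $\hat{\boldsymbol{x}}_{-i}$ and $\boldsymbol{x}_{-i}$ independently. Concretely, I must exhibit a single fixed $\hat{\boldsymbol{x}}_{-i}$ such that $M_A(x_i,\hat{\boldsymbol{x}}_{-i})$ leaves $i$ unserved while $M_A(x_i',\hat{\boldsymbol{x}}_{-i})$ serves $i$; the natural choice is to take $\hat{\boldsymbol{x}}_{-i}$ to be the restriction of $\boldsymbol{x}'$ to coordinates $\neq i$ (so agent $j$ reports $x_i$ and everyone else reports their original value), and then observe that $(x_i,\hat{\boldsymbol{x}}_{-i})$ and $(x_i',\hat{\boldsymbol{x}}_{-i})=(x_j,\hat{\boldsymbol{x}}_{-i})$ are exactly the profiles $\boldsymbol{x}$ (after the harmless relabeling that agent $j$'s report equals $x_i$ — but wait, this changes $\boldsymbol{x}$, so in fact one should set things up so that $(x_i,\hat{\boldsymbol{x}}_{-i})$ is a fresh $i$-identifiable profile to which anonymity still applies). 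The clean way around this is: first apply allocation-anonymity to the swap $\boldsymbol{x}\leftrightarrow\boldsymbol{x}'$ to learn $i\in A_{\boldsymbol{x}'}$; then \emph{separately} note that $\boldsymbol{x}'$ is itself $j$-identifiable, apply \stp to agent $i$ at true location $x_i=x_j'$ comparing report $x_i'=x_j=(\boldsymbol{x}')_i$ against report $x_i$, with others fixed at $(\boldsymbol{x}')_{-i}$; under the truthful report $x_i=x_j'$ the profile is $\boldsymbol{x}'$ and $i\in A_{\boldsymbol{x}'}$ is served — wait, this is backwards. One must instead run the deviation in the direction that turns an unserved agent into a served one, so the correct framing is to let agent $i$'s \emph{true} location be a point $p$ for which $i\notin A$ at the truthful profile but, by reporting the (distinct) location of a currently-served agent and invoking anonymity on the resulting swapped profile, $i\in A$. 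Getting this quantifier order exactly right — and verifying the ex-post equilibrium utilities in the modified subgame $\Gamma_{\boldsymbol{x}}(s,k,A)$ at each profile — is the crux; the rest is a routine choice of a profile in a small interval bounded away from $0$ and $1$ to guarantee the strict utility gain.
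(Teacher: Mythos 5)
Your overall strategy -- use allocation-anonymity to transfer ``allocated'' status from a served agent to an unserved one, then exhibit a profitable unilateral misreport -- is the same idea as the paper's, but your concrete construction does not close, and you acknowledge the unresolved crux yourself. The problem is exactly the bookkeeping you flag: the swapped profile $\boldsymbol{x}'$ differs from the truthful profile $\boldsymbol{x}$ in \emph{two} coordinates ($i$'s and $j$'s), so no unilateral deviation by agent $i$ connects $\boldsymbol{x}$ to $\boldsymbol{x}'$. If you fix the others' reports at $\boldsymbol{x}_{-i}$, then $i$'s misreport to $x_j$ produces a profile in which both $i$ and $j$ report $x_j$; that profile is not $i$- (or $j$-) identifiable, so anonymity tells you nothing about whether $i$ is allocated there, and your ``correct framing'' (report the location of a currently-served agent and invoke anonymity on ``the resulting swapped profile'') is invoking anonymity at a profile that is not the swap and to which the axiom does not apply. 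If instead you fix the others' reports at $\boldsymbol{x}'_{-i}$ (so $j$ reports $x_i$), then the misreport $x_j$ does reach $\boldsymbol{x}'$ where $i\in A_{\boldsymbol{x}'}$, but the truthful report now lands on the profile $\boldsymbol{w}=(x_i,\boldsymbol{x}'_{-i})$, and you have no way to show $i\notin A_{\boldsymbol{w}}$ (your information $i\notin A_{\boldsymbol{x}}$ is about a different profile, and $\boldsymbol{w}$ again has $i$ and $j$ reporting the same location, so anonymity is silent). With an all-distinct starting profile these ``intermediate'' allocations simply are not pinned down, and a case analysis over them stalls; this is a genuine missing idea, not a routine quantifier check.

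The paper's proof shows how to engineer around exactly this obstacle: it starts from the \emph{all-equal} profile ($x_i=3/4$ for all $i$), fixes a served agent $i^*$ and an unserved agent $j^*$ there, and considers the profile $\boldsymbol{x}'$ obtained by moving only $j^*$ to $1/2$ -- a one-coordinate change, hence reachable by a unilateral deviation -- and then splits on whether $j^*\in A'$. If yes, $j^*$ (truly at $3/4$) profitably deviates to $1/2$. If no, anonymity is applied at the $j^*$-identifiable profile $\boldsymbol{x}'$, swapping $i^*$ and $j^*$, to conclude $i^*$ is unallocated at the profile $\boldsymbol{y}$ ($i^*$ at $1/2$, everyone else at $3/4$); then $i^*$'s deviation from $\boldsymbol{y}$ back to $3/4$ lands on the original all-equal profile, whose allocation (with $i^*\in A$) is already known. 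The all-equal base profile is what makes ``mimic a served agent'' return to a profile with a known allocation -- precisely the fact your all-distinct construction lacks. To repair your argument you would need essentially this choice of base profile (or some other device that pins down the allocation at the post-deviation profile), so as written the proposal has a genuine gap at its central step.
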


\begin{proof}
For the sake of a contradiction suppose that $M_A$ is a location-allocation mechanism that is both allocation-anonymous and \stp.

First consider a location profile $\boldsymbol{x}$ where $x_i=3/4$ for all $i\in N$, and denote the output of the mechanism by $M_A(\boldsymbol{x})=(s,A)$. Let $i^*$ be some agent such that $i^*\in A$ and $j^*$ some agent such that $j^*\notin A$. In this outcome agent $j^*$ attains utility zero, since $j^*\notin A$.

Now consider another location profile $\boldsymbol{x}'$ such that $x_i'=3/4$ for all $i\in N\backslash \{j^*\}$ and  $x_{j^*}'=1/2$. Note that the profile $\boldsymbol{x}'$ can be achieved via a unilateral deviation from the profile $\boldsymbol{x}$ by agent $j^*$. Denote the mechanism's output from this location profile by $M_A(\boldsymbol{x}')=(s', A')$.  We consider two cases and derive a contradiction in each case.\\

Case 1: Suppose $j^*\in A'$ and suppose that $\boldsymbol{x}$ is the true location of all agents. In this case agent $j^*$ by misreporting their location to $x_{j^*}'=1/2$ strictly profits, since under the profile $\boldsymbol{x'}$ we have $j^*\in A'$ and their utility is now $1-d(s',\frac{3}{4})>0$ rather than zero. Thus, we have a contradiction. \\

Case 2: Suppose $j^*\notin A'$ and suppose that the agents have true locations $y_\ell=3/4$ for all $\ell \in N\backslash \{i^*\}$ and $y_{i^*}=1/2$. Denote the mechanism's output from location profile $\boldsymbol{y}$ by $M_A(\boldsymbol{x}')=(s'', A'')$. Notice that $\boldsymbol{x}'$ is a $j^*$-identifiable location profile and the profile $\boldsymbol{y}$ satisfies the condition in Definition~\ref{allocation anon definition}, and so by the allocation-anonymous property we require that 
$$i^*\in A'' \iff j^*\in A'.$$
Thus, we infer that $i^*\notin A''$ and attain zero utility under the location profile $\boldsymbol{y}$. Now suppose agent $i^*$ unilaterally deviates and reports the location $y_{i^*}'=3/4$. In this case, the location profile coincides with the profile $\boldsymbol{x}$ where $x_\ell=3/4$ for all $\ell\in N$. But recall that $M_A(\boldsymbol{x})=(s,A)$ and $i^*$ was taken to be some agent such that $i^*\in A$. Thus, under this unilaterally misreport agent $i^*$ is now served and attain strictly positive utility of $1-d(s, y_{i^*})>0$. This is a profitable deviation and contradicts our assumption that the mechanism $M_A$ was \stp. We conclude that there is no location-allocation mechanism that is both allocation-anonymous and \stp.
\end{proof}

The above impossibility result means that the extensive-form approach taken in the main body of this paper is crucial for \stp mechanisms that are non-dictatorial. The use of an extensive-form game and corresponding ex-post Nash equilibria to decide the allocation of agents to the facility reduces the incentive compatibility constraints faced by the mechanism designer. Furthermore, the result suggests that the excludability of the facility presents a greater challenge for incentive compatibility than rivalry.


\section{Discussion and Conclusion}\label{section: discussion and conlc}


We now conclude the paper with a brief discussion of future research directions.


\underline{Extensions to multiple facilities:} In the present paper we focused on the case of a single capacity constrained facility location problem. Extending the capacity constrained to multiple facilities presents a number of challenges. Firstly, the subgame induced from a profile of facility location will lead to multiple equilibria that are not welfare (nor utility) equivalent. Furthermore, even when ignoring the multiplicity of equilibria issues, the mechanism design problem is drastically more complicated -- as is the algorithmic problem of finding the optimal facility locations (see Brimberg et al.~\cite{BKE-C+01}). A recent contribution by Golowich, Narasimhan and Parkes~\cite{GNP18} explores the mechanism design problem for multiple facilities without capacity constraints.

\underline{Weakening \stp:} A natural direction to consider is weakening the strategproofness concept (\stp) that we use in the present paper. The \stp requirement is very strong: agents must attain maximal ex-post utility from reporting their location no matter what other agents report, and other agents' true locations.  The weaker notion of \emph{ex-post Incentive Compatible (IC)} may be interesting to be explore for both characterization and performance results. This notion requires that agents attain maximal ex-post utility from reporting their location no matter the other agents' true locations, but conditional on the other agents reporting truthfully. It is straightforward to construct IC mechanisms that out-perform the median mechanism for certain parameter ranges.

\textbf{Conclusion:} In this paper we initiated the study of the capacity constrained facility location problem from a mechanism design perspective. We formalized a model that allows the subset of served agents to be endogenously derived from equilibrium outcomes. Our main contribution is a complete characterization of all \stp mechanisms via the family of GMM mechanisms. This characterization also provides a novel perspective to an open problem in regard to GMM mechanisms, posed in~\cite{BoJo83}. Our second contribution is an analysis of the performance of \stp mechanisms with respect to social welfare -- where we also show that the well-known median mechanism is optimal among \stp mechanism for certain parameter ranges. Finally, we show that extending the space of mechanisms to allow the mechanism to allocate agents to the facility leads to a  stark impossibility result. Namely, there is no allocation-anonymous \stp mechanism which both locates the facility and stipulates the subset of agents to be served.

\bibliographystyle{ACM-Reference-Format}
%

  \bibliography{abb_bl.bib,adt_bl.bib}

\appendix

\section{Omitted Proofs}

\begin{proof}[Proof of Proposition~\ref{Proposition: Basic prop 1}]
Let \instance be an arbitrary instance, and consider the subgame $\Gamma_{\boldsymbol{x}}(s, k)$. We first show there always exists an equilibrium where agents in $N_k^*(\boldsymbol{x}, s)$ are served, and the others are not. To see this, suppose all $i\in N_k^*(\boldsymbol{x}, s)$ choose action $a_i=s$ and all other agents choose $a_i=\emptyset$. In this case, the $k$ agents in $N_k^*(\boldsymbol{x}, s)$ attain utility $1-d(s, x_i)\ge 0$, and all other agents attain utility zero. An agent in $i\in N_k^*(\boldsymbol{x}, s)$ can never strictly benefit from deviating to $a_i=0$, since this leads to utility zero.  An agent $j$ not in $N_k^*(\boldsymbol{x}, s)$ can never strictly benefit from deviating to $a_j=s$, since the tie-breaking rule ($\triangleright$) would lead to the agent not being served -- hence, attaining utility $-d(s,x_j)\le 0$. 

We now show that an agent's equilibrium utility is invariant across equilibria (when multiple equilibria exist). Let $i\in N$ be some agent, let $\sigma, \sigma'$ be two distinct equilibria of the subgame $\Gamma_{\boldsymbol{x}}(s,k)$. Denote agent $i$'s utility in each of these equilibria by $\bar{u}_i, \bar{u}_i'$, respectively. Note that $\bar{u}_i, \bar{u}_i'\in \{0,\ 1-d(s,x_i)\}$.

For the sake of a contradiction suppose that $\bar{u}_i\neq \bar{u}_i'$, notice that this necessarily implies that $d(s,x_i)<1$ and $k<n$. Without loss of generality assume $\bar{u}_i=0$ and $\bar{u}_i'=1-d(s,x_i)>0$. If $\sigma$ is an equilibrium it must be that agent $i$ is not served when choosing action $a_i=s$; that is, agent $i$ is not in the set of $k$-closest agents $N_k^*(\boldsymbol{x}, s)$. Now consider the equilibrium $\sigma'$ where agent $i$ is served. Given that the facility has capacity $k<n$, and agent $i$ is served despite $i\notin N_k^*(\boldsymbol{x}, s)$ there must be an agent $j\in N_k^*(\boldsymbol{x}, s)$ such that they choose action $a_j'=\emptyset$ (and are not served). In this case, agent $j$ attain utility zero in equilibrium $\sigma'$. If instead agent $j$ unilaterally deviated to the action $a_j''=s$ they would be served and attain utility $1-d(s, x_j)\ge 0$. Furthermore, $1-d(s, x_j)> 0$ since agent $j \in N_k^*(\boldsymbol{x}, s)$ and $i\notin N_k^*(\boldsymbol{x}, s)$ and so $d(s, x_j)\le d(s, x_i)<1$. We conclude that $\sigma'$ is not an equilibrium; that is, we have a contradiction.
\end{proof}

\begin{proof}[Proof of Proposition~\ref{Proposition: Basic prop 2}]
Let  \instance and $\langle \boldsymbol{x}, s', k\rangle$ be two instances such that for some agent $i$ $s<s'\le x_i$, or $x_i\le s'<s$. From Proposition 3.1, in the first instance we know that agent $i$ attains utility $1-d(s, x_i)$ whenever $i\in  N_k^*(\boldsymbol{x}, s)$ and otherwise attains utility zero. Similarly, in the second instance  $i$ attains utility $1-d(s', x_i)$ whenever $i\in  N_k^*(\boldsymbol{x}, s')$ and otherwise attains utility zero. The set $N_k^*(\boldsymbol{x}, s)$ is defined as the  $k$-closest agents with respect to $\triangleright$ to the facility, and $N_k^*(\boldsymbol{x}, s')$ is similarly defined with respect to the priority  $\triangleright'$.  If $s<s'\le x_i$ or $x_i\le s'<s$ then agent $i$'s priority under  $\triangleright'$ (weakly) improves compared to their priority under  $\triangleright$. Thus, $i\in N_k^*(\boldsymbol{x}, s)$  implies $i\in N_k^*(\boldsymbol{x}, s')$. We conclude that agent $i$ attains weakly higher utility in the $\langle \boldsymbol{x}, s', k\rangle$ instance.
\end{proof}

\end{document}